\newcommand{\modelname}{DAAKG\xspace}
\DeclareMathOperator*{\argmin}{arg\,min}
\DeclareMathOperator*{\argmax}{arg\,max}
\begin{document}

%%
%% The "title" command has an optional parameter,
%% allowing the author to define a "short title" to be used in page headers.
\title[Deep Active Alignment of Knowledge Graph Entities and Schemata]{Deep Active Alignment of Knowledge Graph Entities and Schemata}

\settopmatter{authorsperrow=3}

\author{Jiacheng Huang}
\affiliation{
    \department{State Key Laboratory for Novel Software Technology} 
    \institution{Nanjing University}
    \country{China}
}
\email{jchuang.nju@gmail.com}

\author{Zequn Sun}
\affiliation{
    \department{State Key Laboratory for Novel Software Technology} 
    \institution{Nanjing University}
    \country{China}
}
\email{zqsun.nju@gmail.com}

\author{Qijin Chen}
\affiliation{
    \institution{Alibaba Group}
    \country{China}
}
\email{qijin.cqj@alibaba-inc.com}

\author{Xiaozhou Xu}
\affiliation{
    \institution{Alibaba Group}
    \country{China}
}
\email{heixia.xxz@alibaba-inc.com}

\author{Weijun Ren}
\affiliation{
    \institution{Alibaba Group}
    \country{China}
}
\email{afei@alibaba-inc.com}

\author{Wei Hu}
\authornote{Wei Hu is the corresponding author.}
\affiliation{
    \department{State Key Laboratory for Novel Software Technology} 
    \department{National Institute of Healthcare Data Science}
    \institution{Nanjing University}
    \country{China}
}
\email{whu@nju.edu.cn} 
\renewcommand{\shortauthors}{Jiacheng Huang et al.}
%%
%% By default, the full list of authors will be used in the page
%% headers. Often, this list is too long, and will overlap
%% other information printed in the page headers. This command allows
%% the author to define a more concise list
%% of authors' names for this purpose.
% \renewcommand{\shortauthors}{J. Huang, Z. Sun, Q. Chen, X. Xu, W. Ren, \& W. Hu}

%%
%% The abstract is a short summary of the work to be presented in the
%% article.
\begin{abstract}
Knowledge graphs (KGs) store rich facts about the real world. 
In this paper, we study KG alignment, which aims to find alignment between not only entities but also relations and classes in different KGs. 
Alignment at the entity level can cross-fertilize alignment at the schema level.
We propose a new KG alignment approach, called \modelname, based on deep learning and active learning.
With deep learning, it learns the embeddings of entities, relations and classes, and jointly aligns them in a semi-supervised manner.
With active learning, it estimates how likely an entity, relation or class pair can be inferred, and selects the best batch for human labeling.
We design two approximation algorithms for efficient solution to batch selection.
Our experiments on benchmark datasets show the superior accuracy and generalization of \modelname and validate the effectiveness of all its modules.
\end{abstract}

%%
%% The code below is generated by the tool at http://dl.acm.org/ccs.cfm.
%% Please copy and paste the code instead of the example below.
%%
\begin{CCSXML}
<ccs2012>
    <concept>
        <concept_id>10010147.10010178.10010187.10010188</concept_id>
        <concept_desc>Computing methodologies~Semantic networks</concept_desc>
        <concept_significance>500</concept_significance>
    </concept>
    <concept>
        <concept_id>10010147.10010257.10010282.10011304</concept_id>
        <concept_desc>Computing methodologies~Active learning settings</concept_desc>
        <concept_significance>500</concept_significance>
    </concept>
    <concept>
        <concept_id>10002951.10002952.10003219.10003223</concept_id>
        <concept_desc>Information systems~Entity resolution</concept_desc>
        <concept_significance>500</concept_significance>
    </concept>
</ccs2012>
\end{CCSXML}

\ccsdesc[500]{Computing methodologies~Semantic networks}
\ccsdesc[500]{Computing methodologies~Active learning settings}
\ccsdesc[500]{Information systems~Entity resolution}

\keywords{knowledge graph, entity alignment, schema matching, deep neural networks, active learning}

%% A "teaser" image appears between the author and affiliation
%% information and the body of the document, and typically spans the
%% page.
% \begin{teaserfigure}
%   \includegraphics[width=\textwidth]{sampleteaser}
%   \caption{Seattle Mariners at Spring Training, 2010.}
%   \Description{Enjoying the baseball game from the third-base
%   seats. Ichiro Suzuki preparing to bat.}
%   \label{fig:teaser}
% \end{teaserfigure}

%%
%% This command processes the author and affiliation and title
%% information and builds the first part of the formatted document.
\maketitle

%====================
\section{Introduction}
\label{sect:intro}

Knowledge graphs (KGs), which store massive facts about the real world, are used by various parties in many knowledge-driven applications, e.g., semantic search, question answering and recommender systems \cite{KGE,KGSurvey}.
KG alignment is a vital step in knowledge sharing and transfer, which identifies elements (including entities, relations and classes) in different KGs referring to the same real-world thing. 
Recent KG alignment studies~\cite{OpenEA,KGE,EA4KG,BenchmarkEA} mainly focus on entity alignment.
Many methods leverage deep learning techniques to represent entities with low-dimensional embeddings, and align entities with a similarity function on the embedding space trained with seed matches.
Deep alignment methods can resolve the heterogeneity of different KGs. 
They can better compare entities based on the structure information and improve the alignment accuracy.

\begin{figure}
\centering
\includegraphics[width=.8\columnwidth]{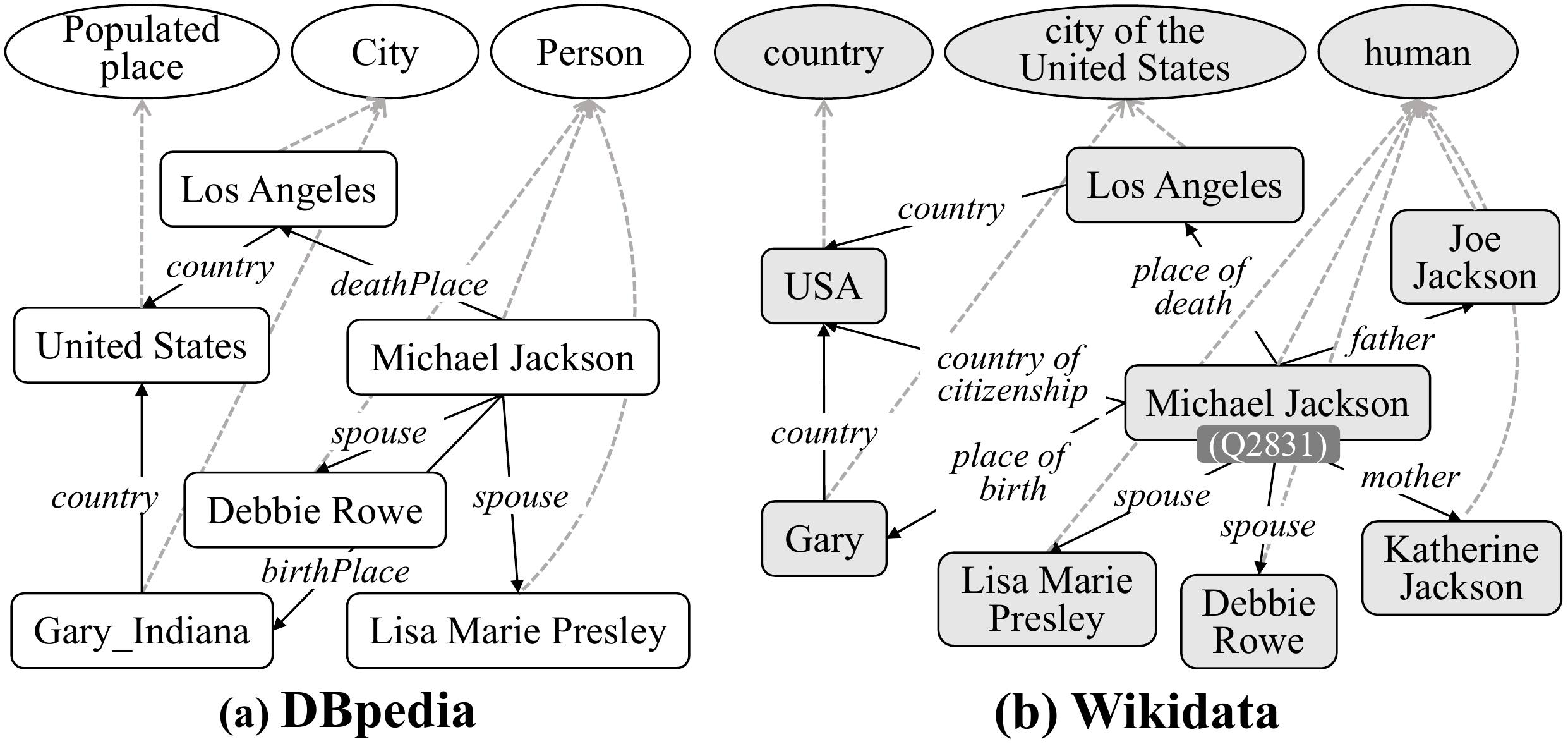}
\caption{An example of KG alignment between DBpedia and Wikidata.
Rectangles denote entities, and ellipses denote classes. 
Solid lines denote relations, and dashed grey lines denote ``\textit{type}''.
\textit{Michael Jackson} in DBpedia and \textit{Q2831} in Wikidata form an entity match.}
\label{fig:example}
\end{figure}

A crucial shortcoming of deep alignment methods is that they demand a large number of seed matches as training data, which may not always be available.
A reasonable solution is active learning, which iteratively asks an oracle or real human annotators to label a small set of training data, and trains a deep alignment model until the labeling budget runs up.
Active (or crowdsourced) entity alignment has been widely studied in the database area, but most existing methods focus on tabular data with literal attributes~\cite{Power,Corleone,Falcon,DIAL,DTAL,DAEM}.
They propose similarity measures or deep learning models to compare literal attributes and generate feature vectors for active learning.
However, entities in KGs differ greatly from those in databases, as different KGs are typically represented by heterogeneous schemata.

\begin{example}\label{ex:kga}
Let us consider a real-world example of KG alignment in Figure~\ref{fig:example}.
Edge $(\textit{Michael}$ $\textit{Jackson},\textit{birthPlace},\textit{Gary\_Indiana})$ in DBpedia \cite{DBpedia} means that the birth place of \textit{Michael Jackson} is \textit{Gary\_Indiana}, while edge $(\textit{Michael Jackson},\textit{place of birth},\textit{Gary})$ in Wikidata \cite{Wikidata} means that the birth place of \textit{Michael Jackson} is \textit{Gary}.
The two entities \textit{Gary\_Indiana} and \textit{Gary} are hard to align based on their literal similarity, e.g., the character-level Jaccard similarity is only $0.44$, but we can confidently infer this pair as a match if we know that they are the birth place of the same person.
\end{example}

This example shows that the active alignment methods for tabular data that only compare entities with literal information cannot be directly applied to KG alignment.
New active alignment methods are required to incorporate edge semantics.
Furthermore, KGs have much more complex schemata than tabular data.
For example, the well-known KG, DBpedia, contains $320$ classes and $1,650$ relations.
Considering such a scale, schema alignment also needs active learning.
Additionally, entity alignment and schema alignment can fertilize each other.
Taking Figure~\ref{fig:example} as an example, relation match $(\textit{birthPlace},\textit{place of death})$ helps an entity alignment model infer entity match $(\textit{Gary\_Indiana}, \textit{Gary})$, and entity matches also help infer class match $(\textit{Person}, \textit{human})$.

In this paper, we propose a full-fledged approach for deep active alignment of KG entities and schemata, called \textbf{\modelname}.
We design a deep alignment model to encode both entity and schema information of KGs for comparing KG elements with their embeddings.
For active learning, we leverage the KG structures and the alignment model to judge which unlabeled element pairs can be inferred with labeled ones, in a similar way as we describe in Example~\ref{ex:kga}.
There are three challenges to fulfilling the deep active alignment.

The first challenge is how to design an embedding-based alignment model.
Although there are a number of existing KG embedding models, they do not carefully capture the schema information~\cite{Universal}.
In fact, the many-to-one problem between entities and classes may degrade the performance of existing models \cite{SIGMOD_KGE,DanglingAware}.
Furthermore, due to the heterogeneity of different KGs, there often exist a bunch of dangling entities that have no matched counterparts \cite{BenchmarkEA}. 
See \textit{Joe Jackson} and \textit{Katherine Jackson} in Figure~\ref{fig:example}.
They would also affect the embedding and alignment learning.
In this paper, we pay attention to encoding the entity-class structure, propose a joint embedding model to align entities, relations and classes simultaneously, and weight entities to reduce the impact of dangling entities.

The second challenge is how to measure the impact of labeled element pairs to unlabeled ones.
Although some relations like \textit{place of birth} can provide strong evidence for entity matches~\cite{PARIS}, most other relations cannot.
We need to assess how well a labeled match can affect the alignment model to infer a neighboring pair.
Our idea is to determine a bound on the embedding difference between them.
When the bound is tight, the neighboring pair is likely to be a match.
In this paper, we propose an inference power measurement to decide which neighboring pair can be safely inferred as a match.

The third challenge is how to efficiently pick a batch of element pairs with the greatest inference power.
As deep models require a large set of labeled pairs as training data, active learning should select a batch of the pairs every time~\cite{ActiveSurvey}.
However, if some pairs can be inferred with others, selecting them to probe the annotators would waste the labeling budget.
We formulate the element pair selection problem in a way that maximizes the expected overall inference power under a given labeling budget.
We prove that it is a sub-modular optimization problem, and give a greedy algorithm to solve it.
We further propose a graph partitioning-based algorithm to improve efficiency.

Our main contributions in this paper are outlined as follows:
\begin{itemize}
    \item We propose a deep KG alignment approach, which is able to jointly align entities, relations and classes. (Sect.~\ref{sect:align})
    
    \item We design an inference power measurement, which gives bounds on the embedding difference to estimate how likely an element pair can be inferred by labeled matches. (Sect.~\ref{sect:utility})
    
    \item We formulate the element pair selection problem for batch active learning, and present an efficient algorithm with an approximation ratio guarantee. (Sect.~\ref{sect:active})
    
    \item We conduct extensive experiments on benchmark datasets to demonstrate the accuracy and generalization of our approach. 
    Source code is accessible online.\footnote{\url{https://github.com/nju-websoft/DAAKG}} (Sect.~\ref{sect:exp}) 
\end{itemize}
%====================
\section{Approach Overview}
\label{sect:overview}

In this section, we first introduce the preliminaries for our studied problem. 
Then, we describe the general workflow of our approach. 

%--------------------
\subsection{Preliminaries}
\label{subsect:prelim}

In this paper, we formulate a KG as a quadruple $G=(E,R,C,T)$, where $E,R,C,T$ denote the sets of entities, relations, classes and triplets, respectively.
To simplify the expression, entities, relations and classes are uniformly called \emph{elements}.
A triplet is in the form of $(head, relation, tail)$, where $head$ is an entity, $tail$ is an entity or a class, and $relation$ describes the relation between $head$ and $tail$. 
Oftentimes, a KG defines various classes to describe entity types with a specific relation \textit{type}, such as $(\textit{Michael Jackson}, \textit{type},\textit{singer})$. 
Note that one entity may belong to multiple classes, e.g., \textit{Michael Jackson} can be both a \textit{singer} and a \textit{dancer}. 

\emph{Entity alignment} aims to identify entities in different KGs referring to the same real-world object. 
Let $e$ and $e'$ denote two entities in two different KGs, respectively.
We call $q=(e,e')$ a \emph{match} and denote it by $y^*(q) = y^*(e,e') = 1$ if $e$ and $e'$ refer to an identical object. 
On the contrary, we call $q=(e,e')$ a \emph{non-match} and denote it by $y^*(q)=y^*(e,e')=-1$ if $e$ and $e'$ refer to two different objects. 
Matches and non-matches are both regarded as \emph{labeled} entity pairs, while other pairs are regarded as \emph{unlabeled}.
Similarly, \emph{schema alignment} aims to align relations and classes between different KGs.
In this paper, only equivalent relations and classes are detected.

\emph{Deep alignment} leverages deep learning models to tackle the KG alignment problem \cite{OpenEA,EA4KG}.
It learns element embeddings, i.e., representations in a continuous vector space, and compares elements through their embeddings to judge whether they are a match or a non-match.
The premise is that such embeddings can potentially mitigate data heterogeneity and facilitate knowledge reasoning.
In many cases, a set of labeled element pairs is provided as training data to jump-start the embedding and alignment learning.

\emph{Active alignment} aligns two KGs with human annotators, who iteratively label a few element pairs to help train the alignment model, and aims to achieve the best accuracy under a fixed labeling cost. 
In active alignment, human annotators are widely assumed to be an oracle, which indicates that they can always return the true labels for unlabeled element pairs. 
In this paper, we consider the pool-based active alignment~\cite{DAEM,DIAL,ActiveEA}, which seeks to decide the labels of a fixed unlabeled element pair set $P$, called \emph{pool}.
As our goal is to find all entity and schema matches, given two KGs $G=(E,R,C,T)$ and $G'=(E',R',C',T')$, the full pool can be $P_\text{full}=(E\times E')\cup(R\times R')\cup(C\times C')$.
For large KGs with millions of elements, the full pool may contain up to $10^{12}$ element pairs, which is inefficient or even impossible to execute an active alignment algorithm.
To tackle this issue, previous work~\cite{DAEM,DIAL} often leverages the \emph{blocking} techniques to filter out a small set of element pairs for composing a pool $P\subset P_\text{full}$, and runs active alignment within $P$.

In this paper, we define the studied problem as follows:
\begin{definition}[Deep active KG alignment]
Given two KGs $G=(E,R,C,T)$ and $G'=(E',R',C',T')$, the deep active KG alignment aims to train a deep alignment model $\mathcal{S}: (E\times E')\cup(R\times R')\cup(C\times C') \rightarrow [-1,1]$ by interactively asking a human oracle to label a small set of element pairs, such that $\forall (x,x')\in(E\times E')\cup(R\times R')\cup(C\times C'), \mathcal{S}(x, x')\approx 1 \Leftrightarrow x$ and $x'$ refer to the same thing.
\end{definition}

Table~\ref{tab:notations} summarizes the frequently-used notations in this paper.
Generally, we use normal letters for denoting sets and elements, and bold letters for vectors and matrices.

\begin{table}[!tb]
\centering
\caption{Frequently-used notations}
\label{tab:notations}
\begin{tabular}{cl}
	\hline 	Notations & Descriptions \\
	\hline	$G=(E,R,C,T)$ & KG with entities, relations, classes, triplets \\
	        $e,r,c,t,x$ & entity, relation, class, triplet, element \\
	        $\mathbf{e},\mathbf{r},\mathbf{c}$ & entity embedding, relation embedding, class embedding \\
			$q=(x,x'),y^*(q)$ & element pair, and its true label \\
			$P,L$ & element pair pool, labeled element pairs \\
			$\mathcal{O},\mathcal{S}$ & objective function, deep alignment model \\
			$\mathcal{I},\mathcal{G}$ & inference power, gain \\
	\hline
\end{tabular}
\end{table}

%--------------------
\subsection{Workflow}
\label{subsect:workflow}

Figure~\ref{fig:workflow} illustrates the general workflow of our approach, which accepts two KGs as input and consists of three main modules:
\begin{itemize}

\item\emph{Embedding-based joint alignment.} 
This module contains a KG embedding model and a joint alignment model.
The KG embedding model not only encodes the entity-relation semantics, but also defines an entity-class scoring function to embed the entity-class structures.
The joint alignment model aligns entities, relations and classes based upon the learned embeddings, and leverages potential matches for semi-supervised model training.

\item\emph{Inference power measurement.}
This module constructs an alignment graph to represent the relatedness among element pairs.
For entity pairs, it searches targeting paths in the alignment graph, and estimates the inference power based on the similarity of relation embeddings along the paths and the bounds introduced by KG embedding estimation.
For relation pairs and class pairs, it measures the inference power based on the gradients of joint alignment model.

\item\emph{Batch active learning.} 
This module couples each entity with its $N$-nearest neighbors based on schema signatures, and composes an element pair pool with these entity pairs and all relation/class pairs.
We define the element pair selection problem to maximize the expected overall inference power, and propose a graph partitioning-based algorithm to find a batch of element pairs in the pool for human annotation.
The newly-labeled element pairs are reused to fine-tune the whole embedding-based joint alignment module.
\end{itemize}

The process terminates after the labeling budget runs out, and returns the alignment of entities, relations and classes as output.

\begin{figure}[!t]
\centering
\includegraphics[width=.7\columnwidth]{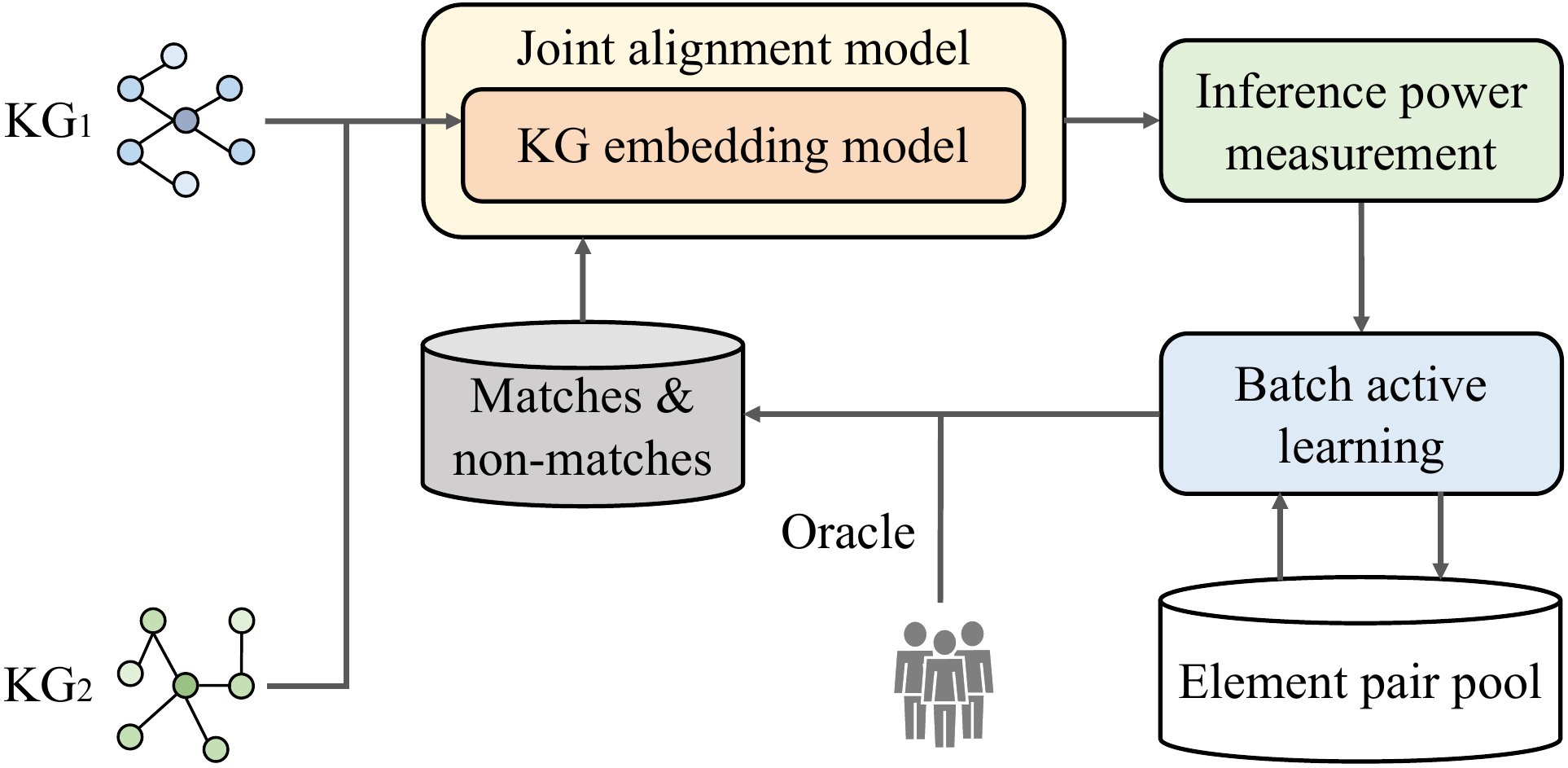}
\caption{Workflow of the proposed approach \modelname}
\label{fig:workflow}
\end{figure}
%====================
\section{Related Work}
\label{sect:related}

In this section, we review two lines of related work for KG alignment, namely deep alignment and active alignment.

%--------------------
\subsection{Deep Alignment}
\label{subsect:deep}

In recent years, many deep learning models have been proposed to deal with the entity alignment task.
Compared to conventional approaches \cite{TKDE}, deep learning models represent different KGs in a unified embedding space to overcome heterogeneity and measure entity similarities based on entity embeddings.
As an early attempt, MTransE \cite{MTransE} jointly learns a translation-based embedding model \cite{TKDE} and a transform-based alignment model.
There are typically three improvement lines in later work.
The first line improves the embedding techniques to fit the entity alignment task.
For example, RSN \cite{RSN} extends recurrent neural networks with the skipping mechanism to model long paths across two KGs. 
GCN-Align \cite{GCN-Align}, MuGNN \cite{MuGNN} and KECG \cite{KECG} improve the alignment model by using graph neural networks (GNN) to encode the neighboring information into entity representations.
AliNet \cite{AliNet} and CG-MuAlign \cite{CG-MuAlign} incorporate multi-hop and multi-type neighborhood, respectively, in GNN with the attention and gating mechanisms.
Dual-AMN \cite{Dual-AMN} employs a normalized hard sample mining method to assist the embedding model in distinguishing similar and dissimilar entities.
Note that, since AliNet, CG-MuAlign and Dual-AMN do not learn relation embeddings, they are not applicable to schema alignment.
DAAKG is a transform-based model with GNN. 
Unlike most GNN-based entity alignment methods, DAAKG learns relation and class embeddings based on KG structures for schema alignment.
The second line focuses on effective alignment learning with limited supervision.
For example, BootEA~\cite{BootEA} introduces a bootstrapping method to iteratively mine likely-matched entities as new training data.
Similar to BootEA, DAAKG also leverages semi-supervision to improve the accuracy, and it further uses schema alignment to improve entity alignment.
The third line seeks to retrieve auxiliary or indirect supervision signals from the side information of entities, e.g., attributes \cite{CEA}, textual descriptions \cite{AttrE,MultiKE} and numeric values \cite{IMUSE}, pre-trained language models \cite{BERT-INT}, and visual modalities \cite{EVA,MSNEA}.
Unlike this line of work, we pay attention to KG structures in this paper.
For more information, we refer interested readers to recent surveys and benchmark studies~\cite{OpenEA,KGE,EA4KG,BenchmarkEA}.
Also, some studies~\cite{LargeEA,LIME_vldbj,ClusterEA} use graph partitioning methods to improve the scalability, which is beyond the scope of this paper.

Regarding schema matching, a.k.a. ontology matching, conventional methods \cite{OMBook} conduct sophisticated feature engineering for similarity computation.
Recent work also resorts to deep learning for this task.
DeepAlignment \cite{DeepAlignment} refines pre-trained word vectors,
aiming at deriving the descriptions of types and relations that are tailored to ontology matching.
LogMap-ML \cite{LogMap-ML} extends conventional ontology alignment systems using distant supervision for training ontology embeddings and Siamese neural networks for incorporating richer semantics.
BERTMap \cite{BERTMap} is a more robust system to learn contextual representations based on fine-tuning BERT \cite{BERT} on ontological text for class alignment. 
It can support both unsupervised and semi-supervised settings.
Compared with these schema matching methods, DAAKG relies on KG structures rather than the textual descriptions of classes and relations.

Deep learning also has an increasing impact on database alignment.
DeepER~\cite{DeepER}, DeepMatcher \cite{DeepMatcher}, Seq2SeqMatcher~\cite{Seq2SeqMatcher} and CorDEL~\cite{CorDEL} leverage word embeddings and deep sequence models to compare literal attributes of entities.
Benefitting from the success of pre-trained language models such as BERT~\cite{BERT}, DITTO~\cite{DITTO} and EMTransformer~\cite{EMTransformer} achieve the state-of-the-art performance on aligning tabular data.
The study \cite{AnaBERT} empirically analyzes recent methods and finds that BERT can recognize the structure of entity alignment datasets and extract knowledge from entity descriptions.

%--------------------
\subsection{Active Alignment}

Active alignment for tabular data attracts much attention in database~\cite{Corleone,Falcon,DTAL,DAEM,DIAL,RiskSampling}.
The key component in these methods is the criteria for element pair selection.
Corleone~\cite{Corleone}, Falcon \cite{Falcon} and DTAL~\cite{DTAL} pick the most uncertain entity pairs with the largest information entropy of the model prediction.
DIAL~\cite{DIAL} and DAEM \cite{DAEM} conduct batch active learning by combining uncertainty and diversity.
DIAL employs the BADGE measure~\cite{BADGE}, which clusters example pairs according to gradients and picks the most uncertain ones in each cluster.
DAEM builds a partial order on entity pairs with masked neural networks to measure their impact on other pairs.
Risk~\cite{RiskSampling} estimates the label mis-prediction risk, and lets an oracle examine risky entity pairs.
DIAL, DAEM and Risk can hardly be applied to KG alignment, because they heavily depend on literal features of entities.
Besides, transfer learning~\cite{DTAL} and adversarial learning~\cite{DAEM} can be used to improve model stability when there are only limited labels.

Hike~\cite{Hike} is a hybrid KG alignment method that partitions KG into several entity groups, and converts each group of entities into a table to conduct active learning.
As Hike compares entities with literal attributes, it cannot infer matches based on KG structures.
As far as know, ActiveEA~\cite{ActiveEA} is the only study towards deep active entity alignment of KGs.
It exploits dependencies between entities to measure the uncertainty of each entity as well as its impact on the neighboring entities in the KG, and also leverages a bachelor recognizer to distinguish entities that appear in one KG but not in the other.
Compared to ActiveEA, our approach further considers the diversity of element pairs for batch active learning.
%====================
\section{Embedding-based Joint Alignment}
\label{sect:align}

As shown in Figure~\ref{fig:network}, we first introduce a KG embedding model to encode the semantics of entities, relations and classes in a KG. 
Then, we propose a joint alignment model to align entities, relations and classes between two KGs based on the embeddings.

\begin{figure*}
\centering
\includegraphics[width=\textwidth]{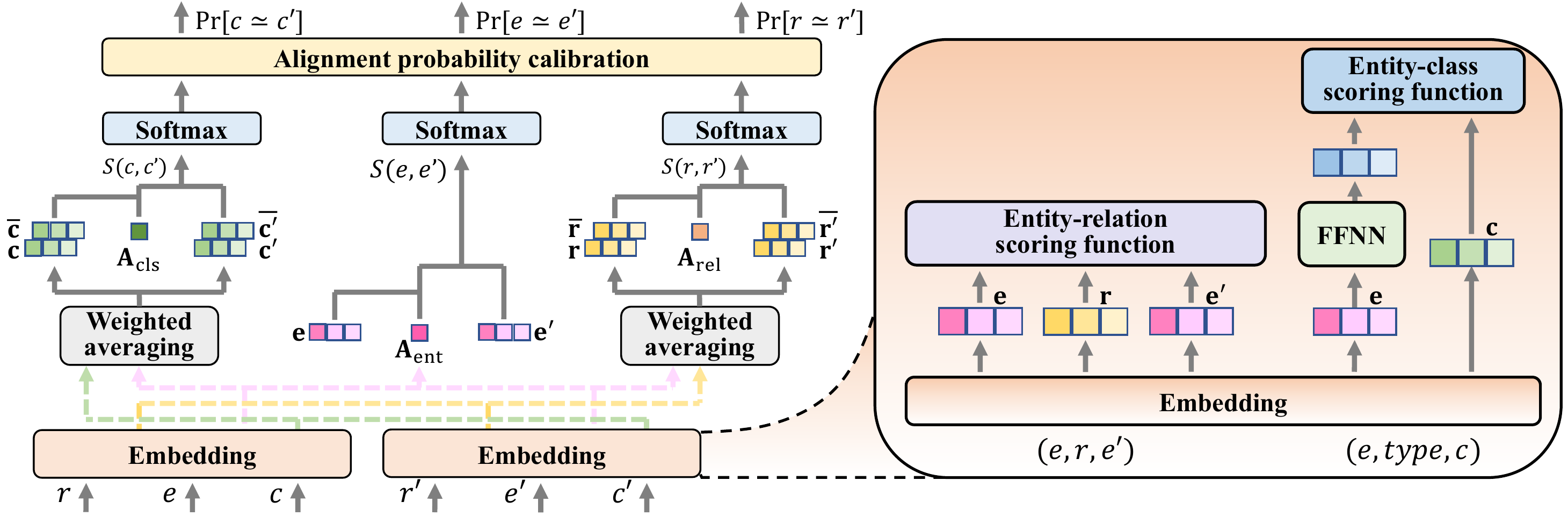}
\caption{Embedding-based joint alignment network}
\label{fig:network}
\end{figure*}

%--------------------
\subsection{Knowledge Graph Embedding Model}

\noindent\textbf{Entity-relation embedding.}
Many KG embedding models~\cite{KGSurvey,KGE} focus on modeling the entity-relation structure of a KG.
They represent entities and relations in a KG $G$ as vectors, and learn a scoring function $f_\text{er}(\mathbf{e},\mathbf{r},\mathbf{e}')$ such that for each triplet $(e,r,e')$ in $G$, we have $f_\text{er}(\mathbf{e},\mathbf{r},\mathbf{e}')\approx 0$; and for each triplet $(e,r,e')$ not in $G$, we have $f_\text{er}(\mathbf{e},\mathbf{r},\mathbf{e}') > 0$.
Since there exist various KG embedding models and new models are constantly being designed, our goal is not limited to a specific one but only requires that the picked model can learn the above scoring function.
Without loss of generality, given a triplet set $T$, we define the loss function for entity-relation embedding, denoted by $\mathcal{O}_\text{er}(T)$, as follows:
\begin{align}
    \mathcal{O}_\text{er}(T) = \sum_{(e,r,e')\in T \vphantom{\tilde{T}}} \smashoperator[r]{\sum_{(e,r,e'')\in \tilde{T}}} \ \big|\, \lambda_\text{er} + f_\text{er}(\mathbf{e},\mathbf{r},\mathbf{e}') - f_\text{er}(\mathbf{e},\mathbf{r},\mathbf{e}'') \,\big|_+,
\end{align}
where $\tilde{T}$ denotes the fake triplet set generated by randomly substituting the tail entity of each triplet in $T$.
It is worth noting that, for each triplet $(e,r,e')$ in $G$, we also add a reverse triplet $(e',r^{-1},e)$, where $r^{-1}$ denotes the synthetic reverse relation of $r$. 
Thus, we only need to replace the tail entities to construct $\tilde{T}$.
$\lambda_\text{er}$ is the margin, and $|\cdot|_+ = \max(\cdot, 0)$.

\smallskip
\noindent\textbf{Entity-class embedding.}
Classes in KGs are used to describe a set of entities, which challenge conventional KG embedding models due to the many-to-one problem \cite{Poincare,TransE}.
In this paper, we model each class as a subspace of the entity embedding space such that for any entity $e$ and class $c$, $e$ belongs to $c$ if and only if the entity embedding $\mathbf{e}$ lies in the subspace determined by $c$. 
Considering the geometric structures of existing entity embedding spaces, in particular the hypersphere \cite{DyERNIE} and hyperbolic spaces~\cite{Hyperbolic,Poincare}, we map each entity embedding into a linear space with a forward-feed neural network (FFNN).
Then, we judge whether the mapped embedding lies in the linear space determined by $c$. 
Specifically, for each entity $e$ and class $c$, we define the scoring function, denoted by $f_\text{ec}(e,c)$, as follows:
\begin{align}
    f_\text{ec}(e,c) = \big|\big|\,\mathbf{W}_c \operatorname{FFNN}(\mathbf{e}) - \mathbf{b}_c\,\big|\big|,
\end{align}
where $\mathbf{W}_c$ and $\mathbf{b}_c$ are learnable parameters for class $c$, and the scoring function yields a subspace $\big\{e\,|\,f_\text{ec}(e,c)\approx 0\big\}$. 
$||\cdot||$ denotes the Euclidean norm of vectors. 
When $e$ belongs to $c$, we expect $f_\text{ec}(e,c)\approx 0$; otherwise, we expect $f_\text{ec}(e,c) > 0$.
This equation enables a large number of entities to be located in the same space for resolving the many-to-one problem. 

For an entity-class triplet set $T_{\text{type}}=\big\{(e,type,c)\in T\big\}$, we define the loss function for entity-class embedding, denoted by $\mathcal{O}_\text{ec}(T_{\text{type}})$, as follows:
\begin{align}
    \mathcal{O}_\text{ec}(T_{\text{type}}) = \smashoperator[l]{\sum_{(e,\textit{type},c)\in T_{\text{type}} \vphantom{\tilde{T}}}} \smashoperator[r]{\sum_{(e',\textit{type},c)\in \tilde{T}_{\text{type}}}}\ \big|\, \lambda_\text{ec} + f_\text{ec}(e,c) - f_\text{ec}(e',c) \,\big|_+,
\end{align}
where $\tilde{T}_\text{type}$ denotes the fake entity-class triplet set generated by randomly substituting entities that do not belong to each class in $T_{\text{type}}$.
$\lambda_\text{ec}$ is the margin.

%--------------------
\subsection{Joint Alignment Model}

\noindent\textbf{Entity alignment.}
The alignment model measures the similarity of two entities based on the cosine similarity of their embeddings.
%For each entity, we choose the most similar counterpart to assemble an entity match. 
As entities from different KGs are learned in different embedding spaces, we follow \cite{MTransE,OTEA,DBP2} and leverage a learnable mapping matrix $\mathbf{A}_\text{ent}$ to map the embedding of each entity in a KG $G$ to the embedding space of another KG $G'$. 
Formally, the output of alignment model for an entity pair $(e,e')$ is defined as 
\begin{align}
\mathcal{S}(e, e')=\cos(\mathbf{A}_\text{ent}\,\mathbf{e},\mathbf{e}'), 
\end{align}
where $\mathbf{e},\mathbf{e}'$ denote the embeddings of $e,e'$ in $G$ and $G'$, respectively.

Based on the alignment model, given a labeled entity match set $M_\text{ent}$, we define the loss function for entity alignment, denoted by $\mathcal{O}_\text{ea}(M_\text{ent})$, as follows:
\begin{align}
    \mathcal{O}_\text{ea}(M_\text{ent}) = - \smashoperator[l]{\sum_{(e,e')\in M_\text{ent} \vphantom{\tilde{M}}}} \smashoperator[r]{\sum_{(e'',e''')\in \tilde{M}_\text{ent}}}\ \operatorname{softmax}\big(\mathcal{S}(e,e'),\mathcal{S}(e'',e''')\big),
\end{align}
where $\tilde{M}_\text{ent}$ denotes the entity non-match set obtained by randomly substituting either of two matched entities in $M_\text{ent}$.

\smallskip
\noindent\textbf{Schema alignment.}
The similarities between relations and between classes can be computed based on their embeddings as well. 
However, these embeddings are learned with the embeddings of related entities, which may be affected by dangling entities that have no matched ones in another KG \cite{DBP2}. 
To alleviate the influence of dangling entities, we weight each entity embedding with its similarity to the most similar entity in another KG.
Formally, given two sets of entities $E$ and $E'$ in different KGs, the weight for an entity $e\in E$, denoted by $w_e$, is defined as
\begin{align}
    w_e = \max_{e'\in E'}\mathcal{S}(e, e').
\end{align}

Symmetrically, the weight for each entity $e' \in E'$ is defined as $w_{e'} = \max_{e\in E}\mathcal{S}(e, e')$.

For a relation $r$, each of its related triplets determines a local optimum relation embedding. 
We softly average all the local optimum embeddings of relation $r$ as its feature to reduce the impact of dangling entities.
Specifically, given a relation $r$, for each triplet $(e, r, e')\in T$, we find a local optimum relation embedding by minimizing $f_\text{er}(\mathbf{e},\mathbf{r},\mathbf{e}')$, and compute the mean embedding from all local optimums, denoted by $\bar{\mathbf{r}}$, as follows:
\begin{align}
    \bar{\mathbf{r}} = \frac{\sum_{(e,r,e')\in T}\min(w_e, w_{e'}) \argmin_{\mathbf{r}}f_\text{er}(\mathbf{e},\mathbf{r},\mathbf{e}')}{\sum_{(e,r,e')\in T}\min(w_e, w_{e'})}.
\end{align}

Note that $\min(w_e, w_{e'})$ implies that, when $e$ or $e'$ is a dangling entity, the triplet $(e,r,e')$ would be softly removed.

When calculating the similarity of two relations $r$ and $r'$, we pick the larger cosine similarity from relation embeddings and mean embeddings, which is $\mathcal{S}(r,r')=\max\big(\cos(\mathbf{A}_\text{rel}\,\mathbf{r},\mathbf{r'}),\cos(\mathbf{A}_\text{ent}\,\bar{\mathbf{r}},\bar{\mathbf{r}'})\big)$, where $\mathbf{A}_\text{rel}$ is a learnable mapping matrix.
Note that, as the mean embeddings are obtained from entity embeddings, we use $\mathbf{A}_\text{ent}$ to map them.
The loss function for relation alignment is 
\begin{align}
\mathcal{O}_\text{ra}(M_\text{rel}) = - \smashoperator[l]{\sum_{(r,r')\in M_\text{rel} \vphantom{\tilde{M}}}} \smashoperator[r]{\sum_{(r'',r''')\in \tilde{M}_\text{rel}}}\ \operatorname{softmax}\big(\mathcal{S}(r,r'),\mathcal{S}(r'',r''')\big), 
\end{align}
where $M_\text{rel}$ is the labeled relation match set, and $\tilde{M}_\text{rel}$ is the relation non-match set generated by randomly substituting either of two matched relations in $M_\text{rel}$.

For a class $c$, we softly average the embeddings of all the entities belonging to $c$. 
Therefore, the mean embedding of $c$, denoted by $\bar{\mathbf{c}}$, is defined as 
\begin{align}
    \bar{\mathbf{c}} = \frac{\sum_{(e,\textit{type},c)\in T_{\text{type}}} w_e\,\mathbf{e}}{\sum_{(e,\textit{type},c)\in T_{\text{type}}} w_e}.
\end{align}

When computing the similarity of two classes $c$ and $c'$, we pick the larger cosine similarity from class embeddings and mean embeddings, i.e., $\mathcal{S}(c,c')=\max\big(\cos(\mathbf{A}_\text{cls}\,\mathbf{c},\mathbf{c'}),\cos(\mathbf{A}_\text{ent}\,\bar{\mathbf{c}},\bar{\mathbf{c}'})\big)$, where $\mathbf{A}_\text{cls}$ is a learnable mapping matrix.
Again, we use $\mathbf{A}_\text{ent}$ to map the mean embeddings.
The loss function $\mathcal{O}_\text{ca}(M_\text{cls})$ can be defined similarly, where $M_\text{cls}$ is the labeled class match set.

\smallskip
\noindent\textbf{Semi-supervised model training and fine-tuning.}
Previous work \cite{BootEA,OpenEA} shows that leveraging potential entity matches can improve the performance of entity alignment.
We believe that similar conclusions hold for potential relation and class matches.
Specifically, we pick the element pairs with similarity scores that exceed a threshold $\tau$, and leverage them as additional supervision signals.
To deal with the conflicts in element pairs, e.g., one entity is matched with multiple others, we discard the pairs with lower similarity scores, and denote the remaining ones by $M_{\text{semi}}$.
Moreover, some element pairs in $M_{\text{semi}}$ can be non-matches. 
For each element pair $(x,x')$, we use the alignment model learned in the previous step, denoted by $\mathcal{S}_0(\cdot)$, to compute its similarity as soft label, and define the semi-supervised loss function, $\mathcal{O}_\text{semi}(M_{\text{semi}})$, as follows: 
\begin{align}
\mathcal{O}_\text{semi}(M_{\text{semi}}) = - \sum_{(x,x')\in M_\text{semi}} \mathcal{S}_0(x,x') \mathcal{S}(x,x').
\end{align}

Note that $\mathcal{S}_0(x,x')$ is treated as a constant in the model training, so the model optimizer does not update the parameters in $\mathcal{S}_0$.

Active learning iteratively adds newly-labeled element pairs in the training set.
However, re-training the whole joint alignment model from scratch is wasteful.
In this paper, we propose a fine-tuning process that focuses on adjusting the parameters of newly-labeled element pairs, instead of treating all element pairs equally.
Intuitively, the newly-labeled element pairs are more likely to be misclassified than the existing ones by the trained joint alignment model.
We use the focal loss~\cite{Focal}, a modified cross-entropy loss, to change the softmax output $\operatorname{softmax}\big(\mathcal{S}(x,x'),\mathcal{S}(x'',x''')\big)$ in $\mathcal{O}_\text{ea}$, $\mathcal{O}_\text{ra}$ and $\mathcal{O}_\text{ca}$ by $\big(1-\Pr[x\simeq x']\big)^\gamma$, where $\gamma > 0$ is the focus parameter, so that the loss function would focus more on misclassified newly-labeled element pairs. 
We set $\gamma = 2$ as suggested in \cite{Focal}.

\smallskip
\noindent\textbf{Alignment probability calibration.}
Modern neural networks may produce less-calibrated probabilities on binary classification tasks \cite{Calibration}. 
To accurately estimate whether two elements are a match, we use temperature scaling to convert embedding similarities into alignment probabilities.
For simplicity, we describe the process of entity alignment, and apply the same process to schema alignment.

We formulate entity alignment as a classification problem that classifies each entity $e$ in one KG as a matched entity $e'$ in another KG.
In this way, we can obtain the alignment probabilities as the softmax output of classification results with temperature scaling.
Formally, given two entity sets $E$ and $E'$ in different KGs, the probability that $e\in E$ is classified as a candidate entity $e'\in E'$, denoted by $\Pr[e'\,|\,e]$, can be defined as
\begin{align}
\label{eq:mp}
    \Pr[e'\,|\,e] = \frac{ \exp\big(\frac{\mathcal{S}(e, e')}{Z_\text{ent}} \big)}{ \sum_{e''\in E'} \exp\big(\frac{\mathcal{S}(e, e'')}{Z_\text{ent}}\big) },
\end{align}
where $Z_\text{ent}$ is the temperature parameter, which is used to control the output distribution such that, when the temperature is lower, the output distribution is more discriminatory; when the temperature is higher, the output distribution is less discriminatory.
As there may exist a very large number of entities in a KG, a small $Z_\text{ent}$ is assigned in Eq.~(\ref{eq:mp}).

Furthermore, for each entity pair $(e, e')$, we consider both alignment directions, i.e., classifying $e$ as an entity in $E'$ and classifying $e'$ as an entity in $E$. 
Therefore, we define the alignment probability of $(e, e')$, denoted by $\Pr[y^*(e,e')=1]$, as
\begin{align}
\label{eq:prob}
    \Pr[y^*(e,e')=1]=\min\big\{\Pr[e'\,|\, e], \Pr[e\,|\,e']\big\}.
\end{align}

The following active learning uses $\Pr[y^*(e,e')=1]$ as the weight of $(e, e')$ and expects matches having large weights.
When $\Pr[e'\,|\, e]$ or $\Pr[e\,|\,e']$ is small, $(e, e')$ is likely to be a non-match.
To avoid non-matches in active learning, we choose the smaller probability from $\Pr[e'\,|\, e]$ and $\Pr[e\,|\,e']$ to achieve a more conservative classification estimate.
The alignment probability of each relation pair and that of each class pair are defined similarly.

\smallskip
\noindent\textbf{Parameter complexity.}
The dimensions of entity, relation and class embeddings are denoted by $d_e$, $d_r$, $d_c$, respectively.
The entity-class embedding model contains $|C\cup C|d_c$ parameters to represent classes, and $d_e d_c$ parameters to map entity embeddings.
The alignment model for entities, relations and classes requires $d_e^2$, $d_r^2$ and $d_c^2$ parameters, respectively, for the mapping matrices.
Note that the size of mapping matrices is fixed even though the number of elements increases.
The overall parameter complexity also depends on the used KG embedding model. 
For example, TransE~\cite{TransE} totally uses $|E\cup E'|d_e+|R\cup R'|d_r+|C\cup C'|d_c+d_e d_c+d_e^2+d_r^2+d_c^2$ parameters. 
%====================
\section{Inference Power Measurement}
\label{sect:utility}

In this section, we first construct an alignment graph to model the relatedness among element pairs. 
Then, we leverage the alignment graph to measure the inference power of element pairs.

%--------------------
\subsection{Alignment Graph}

Graph structures are widely used to model the alignment states of element pairs and the relatedness among them \cite{REMP,Progressive}.
In this paper, we build an alignment graph by linking the element pairs in the pool if and only if the corresponding elements are connected in the respective KG.
Given two KGs $G=(E,R,C,T), G'=(E',R',C',T')$ and the element pair pool $P$, the alignment graph, denoted by $G\times_P G'$, is a quadruple defined as $\big((E\times E')\cap P, (R\times R')\cap P,(C\times C')$ $\cap\, P, T''\big)$, where $T''=\Big\{\big((x,x'), (r,r'), (x'', x''')\big)\,\big|\,(x,r,x'')\in T\wedge (x',r',x''')\in T'\wedge (x,x')\in P\wedge (r,r')\in P\wedge (x'',x''')\in P \,\Big\}$ denotes the triplet set (including \textit{type}) restricted by $P$, and $x,x',x'',x'''$ are entities or classes.
Figure~\ref{fig:align_graph} illustrates the alignment graph built from Figure~\ref{fig:example}.

\begin{figure}
\centering
\includegraphics[width=.8\columnwidth]{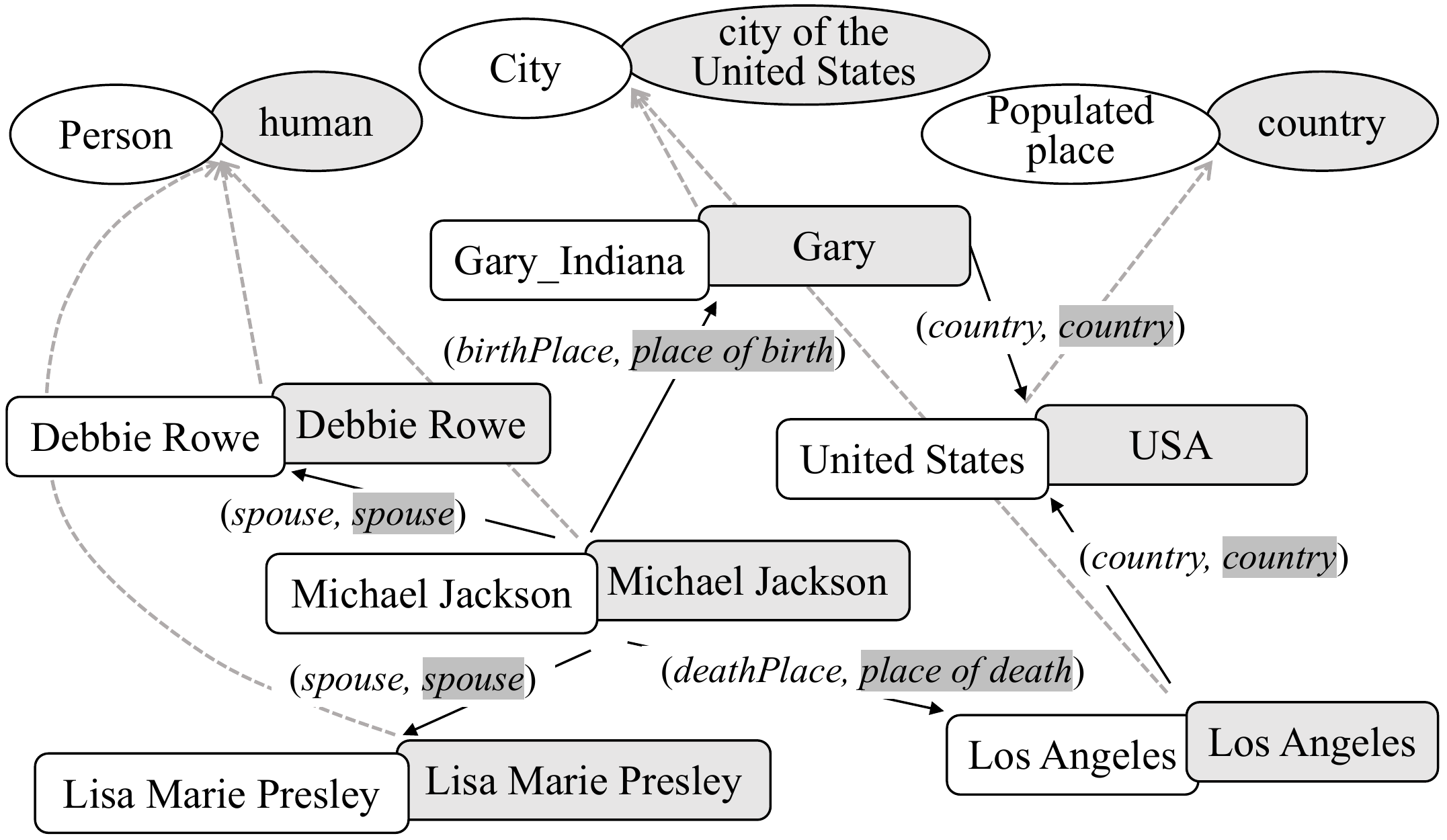}
\caption{The alignment graph built based on Figure~\ref{fig:example}}
\label{fig:align_graph}
\end{figure}

%--------------------
\subsection{Structure-based Inference Power}
The actively-trained joint alignment model discovers new matches based on newly-labeled pairs. 
In order to decide which pairs can be inferred without re-training the model, we measure the structure-based inference power, which indicates how the entity embeddings in labeled pairs affect those in unlabeled pairs.

\smallskip
\noindent\textbf{Inference to entity pairs.}
In the alignment graph, entity pairs are connected through relation pairs. 
The structure-based inference aims to estimate the impact of new labels to the unlabeled entity pairs in terms of the alignment graph.
Suppose that $(e_1,e'_1)$ is a newly-labeled entity match, we consider an edge in the alignment graph, denoted by $(e_1,e'_1) \xrightarrow{(r_1,r'_1)} (e_2,e'_2)$, and infer the label of $(e_2, e'_2)$.
Intuitively, if $r_1$ and $r'_1$ are similar enough, and there are no alternative entities for $e_2$ or $e'_2$ in the KGs, we can judge $(e_2,e'_2)$ as a match.
In order to formulate these two conditions, we leverage the related score functions $f_\text{er}(\mathbf{e}_1,\mathbf{r}_1,\mathbf{e}_2)$ and $f_\text{er}(\mathbf{e}'_1,\mathbf{r}'_1,\mathbf{e}'_2)$.
Note that the embedding-based joint alignment module minimizes the loss function such that $f_\text{er}(\mathbf{e}_1,\mathbf{r}_1,\mathbf{e}_2)\approx 0$ and $f_\text{er}(\mathbf{e}'_1,\mathbf{r}_1',\mathbf{e}'_2)\approx 0$.
We fix the embeddings of entities $e_1$ and $e'_1$, and solve $f_\text{er}(\mathbf{e}_1,\mathbf{r}_1,\mathbf{e}_2)=0, f_\text{er}(\mathbf{e}'_1,\mathbf{r}'_1,\mathbf{e}'_2)=0$ to approximate the embeddings of entities $e_2$ and $e'_2$.
First, we process $f_\text{er}(\mathbf{e}_1,\mathbf{r}_1,\mathbf{e}_2)=0$.
Considering that it is intractable to obtain the exact solution when $f_\text{er}(\cdot)$ is a deep model, and there may exist multiple solutions to $\mathbf{e}_2$, we alternatively find a vector $\mathbf{\tilde{e}}_2$ to approximate $\mathbf{e}_2$ within an error bound $d_1$, i.e., $||\,\mathbf{e}_2-\mathbf{\tilde{e}}_2\,||\leq d_1$.
Inspired of TransE, we introduce a difference vector, denoted by $\mathbf{\tilde{r}}_1 = \mathbf{\tilde{e}}_2 - \mathbf{e}_1$, and we have 
\begin{align}
    ||\,\mathbf{e}_2-(\mathbf{e}_1 + \mathbf{\tilde{r}}_1)\,||\leq d_1.
\end{align}

Similarly, we can find a vector $\mathbf{\tilde{r}}'_1$ and a real number $d'_1$ such that $ \big|\big|\,\mathbf{e}'_2-(\mathbf{e}'_1 + \mathbf{\tilde{r}}'_1)\,\big|\big|\leq d'_1$.

When $f_\text{er}(\cdot)$ is a simple geometric model such as TransE~\cite{TransE}, we can simply obtain that $\mathbf{\tilde{r}}_1=\mathbf{r}_1$ and $d_1=0$.
However, when $f_\text{er}(\cdot)$ is a sophisticated deep neural model such as CompGCN~\cite{CompGCN}, we estimate $\mathbf{\tilde{r}}_1$ and $d_1$ by finding $m$ approximate solutions, computing their mean embedding as $\mathbf{\tilde{r}}_1$, and getting the largest distance from one solution to $\mathbf{e}_1 + \mathbf{\tilde{r}}_1$ as $d_1$.
More specifically, we randomly sample $m$ entity embeddings as the initial values of $\mathbf{e}_2$, use the stochastic gradient descent algorithm to get $m$ local optimum solutions $\mathbf{e}_{2,i}$ for $\min_{\mathbf{e}_2} f_\text{er}(\mathbf{e}_1,\mathbf{r}_1,\mathbf{e}_2)$, and $\mathbf{\tilde{r}}_1$ and $d_1$ are defined as
\begin{align}
    \mathbf{\tilde{r}}_1 = \mathbf{\tilde{e}}_2 - \mathbf{e}_1, \quad
    d_1 = \max_{1\leq i\leq m}\big(||\,\mathbf{e}_{2,i}-\mathbf{\tilde{e}}_2\,||\big), 
\end{align}
where $\mathbf{\tilde{e}}_2 = \frac{1}{m} \sum_{i=1}^m \mathbf{e}_{2,i}$.

Based on the above inequations, we can bound the difference between the embedding of $e'_2$ and the mapped embedding of $e_2$: 
\begin{align}
\label{eq:onehop}
\big|\big|\,\mathbf{A}_\text{ent}\,\mathbf{e}_2-\mathbf{e}'_2\,\big|\big| \leq \big|\big|\,\mathbf{A}_\text{ent}\,\mathbf{e}_1-\mathbf{e}'_1\,\big|\big| +  \big|\big|\,\mathbf{A}_\text{rel}\,\mathbf{\tilde{r}}_1-\mathbf{\tilde{r}}'_1\,\big|\big| + d_1 + d'_1.
\end{align}

Because $(e_1,e'_1)$ is a match, we have $\mathbf{A}_\text{ent}\,\mathbf{e}_1\approx \mathbf{e}'_1$, i.e., $||\,\mathbf{A}_\text{ent}\,\mathbf{e}_1 - \mathbf{e}'_1 \,|| \approx 0$.
Hence, the right term of Eq.~(\ref{eq:onehop}) can be reformulated to $||\,\mathbf{A}_\text{rel}\,\mathbf{\tilde{r}}_1-\mathbf{\tilde{r}}'_1\,|| + d_1 + d'_1$, where $||\,\mathbf{A}_\text{rel}\,\mathbf{\tilde{r}}_1-\mathbf{\tilde{r}}'_1\,||$ denotes the difference of relations $r_1$ and $r'_1$, and $d_1 + d'_1$ denotes the size of the space of possible entities.
When $||\,\mathbf{A}_\text{rel}\,\mathbf{\tilde{r}}_1-\mathbf{\tilde{r}}'_1\,|| + d_1 + d'_1$ is small enough, we know that $||\,\mathbf{A}_\text{ent}\,\mathbf{e}_2-\mathbf{e}'_2\,||$ is small enough, which indicates that $(e_2,e'_2)$ is also a match.
Note that this condition indicates that (1) $||\,\mathbf{A}_\text{rel}\,\mathbf{\tilde{r}}_1-\mathbf{\tilde{r}}'_1\,||$ is small, i.e., $r_1$ and $r'_1$ are similar, and (2) $d_1 + d'_1$ is small, i.e., there are no alternative entities for $e_2$ or $e'_2$.
These two conditions coincide with the intuition mentioned above.

To further extend, when two entity pairs $(e_1,e'_1)$ and $(e_K,e'_K)$ are connected via a path $(e_1,e'_1)$ $\xrightarrow{(r_1,r'_1)}(e_2,e'_2)\xrightarrow{(r_2,r'_2)}(e_3,e'_3)\dashrightarrow (e_{K-1},e'_{K-1})\xrightarrow{(r_{K-1},r'_{K-1})}(e_K,e'_K)$, we can still decide if $(e_K,e'_K)$ is a match given $(e_1,e'_1)$ being a match.
Following the similar approximation, we can obtain $K$ bounds such that $||\,\mathbf{e}_{k+1}-(\mathbf{e}_{k} + \mathbf{\tilde{r}}_k)\,||\leq d_k$, where $k\in\{1,2,\dots,K-1\}$.
By combining these bounds together, we have
\begin{align}
    \Big|\Big|\,\mathbf{e}_K-(\mathbf{e}_1 + \sum_{k=1}^{K-1} \mathbf{\tilde{r}}_k)\,\Big|\Big| \leq \sum_{k=1}^{K-1} d_k.
\end{align}

Symmetrically, we have $||\,\mathbf{e}'_K-(\mathbf{e}'_1 + \sum_{k=1}^{K-1} \mathbf{\tilde{r}}'_k)\,|| \leq \sum_{k=1}^{K-1} d'_k$.
Then, we can define the total difference on the path $(e_1,e'_1)\dashrightarrow(e_K,e'_K)$, denoted by $\mathcal{D}\big((e_1,e'_1)\dashrightarrow(e_K,e'_K)\big)$, as follows:
\begin{align}
    \mathcal{D}\big((e_1,e'_1)\dashrightarrow(e_K,e'_K)\big)  = \Big|\Big|\,\mathbf{A}_\text{rel}\sum_{k=1}^{K-1} \mathbf{\tilde{r}}_{k}-\sum_{k=1}^{K-1} \mathbf{\tilde{r'}}_{k}\,\Big|\Big| + \sum_{k=1}^{K-1} (d_k + d'_k).
\end{align}

Therefore, we can bound the difference between $e_K$ and $e'_K$ by 
\begin{align}
    \big|\big|\,\mathbf{A}_\text{ent}\,\mathbf{e}_K-\mathbf{e}'_K\,\big|\big| \leq \big|\big|\,\mathbf{A}_\text{ent}\,\mathbf{e}_1-\mathbf{e}'_1\,\big|\big| + \mathcal{D}\big((e_1,e'_1)\dashrightarrow(e_K,e'_K)\big).
\end{align}

Again, when $\mathcal{D}\big((e_1,e'_1)\dashrightarrow(e_K,e'_K)\big)$ is sufficiently small, we can judge $(e_K,e'_K)$ as a match.
Note that there may exist several paths from $(e_1,e'_1)$ to $(e_K,e'_K)$. 
The path with the smallest $\mathcal{D}()$ provides the tightest bound.
Thus, the inference power from $(e_1,e'_1)$ to $(e_K,e'_K)$, denote by $\mathcal{I}\big((e_K,e'_K)\,|\,(e_1,e'_1)\big)$, is defined as 
\begin{align}
    \mathcal{I}\big((e_K,e'_K)\,|\,(e_1,e'_1)\big) = \quad \smashoperator[l]{\max_{(e_1,e'_1)\dashrightarrow (e_K,e'_K)}} \Big(\frac{1}{1+\mathcal{D}\big((e_1,e'_1)\dashrightarrow (e_K,e'_K)\big)}\Big).
\end{align}

When $\mathcal{I}\big((e_K,e'_K)\,|\,(e_1,e'_1)\big)$ is large enough, there is a path providing the bound such that when $(e_1,e'_1)$ is a match, $(e_K,e'_K)$ is also a match.
To save the running time, we only consider the paths within $\mu$-hops in this paper.

Another case is that, when a relation pair is labeled as a match, we can set $\big|\big|\,\mathbf{A}_\text{rel}\,\mathbf{\tilde{r}}_1 - \mathbf{\tilde{r}}'_1 \,\big|\big| = 0$ in Eq.~(\ref{eq:onehop}).
As there may be multiple source entity matches reaching $(e_2,e'_2)$ with relation $(r_1,r'_1)$, e.g., $(e_2,e'_2)=(\textit{United States},\textit{USA})$ and $(r_1,r'_1)=(\textit{country},\textit{country})$ in Figure~\ref{fig:align_graph}, we pick the best one, and the inference power from a relation pair to an entity pair can be defined as
\begin{align}
    \mathcal{I}\big((e_2,e'_2)\,|\,(r_1,r'_1)\big) = \quad \smashoperator[l]{\max_{(e_1,e'_1)\text{ is a match}}}\Big(\frac{1}{1+\mathcal{D}\big((e_1,e'_1)\xrightarrow{(r_1,r'_1)} (e_2,e'_2)\big)}\Big).
\end{align}

When $\mathcal{I}\big((e_2,e'_2)\,|\,(r_1,r'_1)\big)$ is large enough, there is an edge from an entity match providing the bound such that $(e_2,e'_2)$ is a match.

\smallskip
\noindent\textbf{Inference to class pairs and relation pairs.}
Given two classes $c$ and $c'$, their similarity $\mathcal{S}(c, c')$ is affected by the belonging entity pairs and their weights.
The inference power of an entity pair $(e,e')$ to $(c,c')$ is defined as its impact on $\mathcal{S}(c, c')$.
The gradient can indicate the impact of entity embeddings to class alignment, and we compute it based on the alignment model, which is
\begin{align}
    \mathcal{I}\big((c,c')\,|\,(e,e')\big) = \big|\big|\,\nabla{\mathbf{e},\mathbf{e}'} \mathcal{S}(c,c')\,\big|\big|.
\end{align}

Given two relations $r$ and $r'$, it requires to know two pairs of entities linked with $(r, r')$ in the alignment graph.
When there is an edge $(e,e')\xrightarrow{(r,r')}(e'',e''')$ and $(e,e')$ is a match, the inference power to $(r,r')$ can also be computed based on the gradient:
\begin{align}
    \mathcal{I}\big((r,r')\,|\,(e,e')\big) = \big|\big|\,\nabla_{\mathbf{e}''-\mathbf{e}, \mathbf{e}'''-\mathbf{e'}} \mathcal{S}(r,r')\,\big|\big|.
\end{align}

\smallskip
\noindent\textbf{Overall inference power.}
Assume that there exist a set of labeled element matches $L^+$.
As different matches in $L^+$ possess different inference power to an unlabeled element pair $q'$, we compute the inference power to $q'$ as the greatest inference power from one element pair $q\in L^+$ to $q'$, i.e., $\mathcal{I}(q'\,|\,L^+) = \max_{q\in L^+} \mathcal{I}(q'\,|\,q)$.

Furthermore, we define the overall inference power of $L^+$ to the element pair pool $P$, denoted by $\mathcal{I}(P\,|\,L^+)$, as follows: 
\begin{align}\label{eq:over_inf}
    \mathcal{I}(P\,|\,L^+) = \sum_{q'\in P :  \mathcal{I}(q'\,|\,L^+) > \kappa}\  \mathcal{I}(q'\,|\,L^+),
\end{align}
where $\kappa$ denotes the inference power threshold.
Again, $q$ (and $q'$) can be an entity, relation or class pair.
%====================
\section{Batch Active Learning}
\label{sect:active}

As KGs may have millions of elements (especially entities),
we first generate a candidate pool to reduce the search space of unlabeled element pairs. 
Then, we formulate the element pair selection problem, and propose two approximation algorithms to pick the best element pairs for human annotation.

%--------------------
\subsection{Element Pair Pool Generation}

Classes in KGs naturally partition entities into different groups.
For example, an entity of class \textit{Person} cannot match another entity of class \textit{Location}.
Similarly, relations can also be used to partition entities \cite{JAPE}.
Based on this characteristic of classes and relations, conventional entity blocking methods~\cite{PBA,Hike} use schema alignment to partition entities with matched classes or relations into the same group, and construct a candidate pool with all pairs of entities in each group.
However, it is difficult for these methods to handle dangling classes or relations that do not have a matched counterpart in the other KG.

Benefiting from KG embeddings, we softly compare the schema information of entities.
Specifically, we generate a \emph{schema signature} for each entity, and calculate the cosine similarity of signatures to find entity pairs with similar schema information.
Given two KGs $G=(E,R,C,T)$ and $G'=(E',R',C',T')$, the schema signature of an entity $e\in E$, denoted by $\operatorname{sig}(e)$, is defined as
\begin{align}
\label{eq:sig}
\operatorname{sig}(e) & = \bigg[
\frac{\sum_{r\in R\,:\,(e,r,e'')\in T} w_r\,\bar{\mathbf{r}}}{\sum_{r\in R\,:\,(e,r,e'')\in T} w_r}; \frac{\sum_{c\in C\,:\,(e,type,c)\in T} w_c\,\bar{\mathbf{c}}}{\sum_{c\in C\,:\,(e,type,c)\in T} w_c}\bigg],\\
w_r & = \max_{r'\in R'}\mathcal{S}(r,r'),\quad w_c = \max_{c'\in C'}\mathcal{S}(c,c'),
\end{align}
where $w_r$ is the weight of relation $r$ and $w_c$ is the weight of class $c$, which are used to reduce the influence of dangling relations and classes, respectively, because dangling ones are likely to have low similarities to others.
The schema signatures of entities in $E'$ can be generated symmetrically.

We obtain the element pair pool $P$ by picking the top-$N$ nearest neighbors for each entity and preserving all relation and class pairs, $P=(R\times R')\bigcup\,(C\times C')\bigcup\bigg(\Big(\cup_{e\in E}\big(\{e\}\times \text{top-}N(e)\big)\Big) \bigcap \Big(\cup_{e'\in E'}\big(\text{top-}N(e')\times\{e'\}\big)\Big)\bigg)$, where top-$N$ is ranked based on the cosine similarity of schema signatures.

%--------------------
\subsection{Element Pair Selection Algorithms}

Training deep KG alignment models often requires a large amount of labels. 
A single labeled element pair can hardly improve the performance of deep models.
Furthermore, unlabeled elements are annotated by humans in real-world settings. 
Assigning several unlabeled element pairs to multiple humans simultaneously can reduce the latency.
Therefore, we consider the selection of a batch of unlabeled element pairs every time to annotate.

Let $Q$ be a set of selected element pairs to be annotated by an oracle, and the true label of an element pair $q\in Q$ is denoted by $y^*(q)$, which is unknown.
After $Q$ is labeled, we use the labeled element matches $Q^+=\{q\in Q\,|\,y^*(q)=1\}$ to infer the labels of the rest of element pairs in pool $P$.
Because we cannot truly determine $Q^+$ unless the oracle labels it, we evaluate the \emph{expected} inference power from $Q$ to element pair $q'\in P$: $\mathds{E}_{\mathds{P}(Q^+\,|\,Q)}\big[\mathcal{I}(q'\,|\,Q^+)\big]$, where $\mathds{P}(Q^+\,|\,Q)$ is the distribution of $Q^+$ given $Q$.
To obtain $\mathds{P}(Q^+\,|\,Q)$, we first compute $\Pr[y^*(q)=1]$ and $\Pr[y^*(q)=-1]=1-\Pr[y^*(q)=1]$ obtained with the joint alignment model, and then compute the conditional probability
\begin{align}
\mathds{P}(Q^+\,|\,Q) =\prod_{q\in Q^+}\Pr\big[y^*(q)=1\big] \prod_{q\in Q\setminus Q^+} \Pr\big[y^*(q)=-1\big].
\end{align}

Furthermore, the expected overall inference power of $Q$ is
\begin{align}
    \mathds{E}_{\mathds{P}(Q^+\,|\,Q)}\big[\mathcal{I}(P\,|\,Q^+)\big] 
    = \mathds{E}_{\mathds{P}(Q^+\,|\,Q)}\Big[\sum_{q'\in P}\mathcal{I}(q'\,|\,Q^+)\Big],
\end{align}
where $P$ is the element pair pool.

Now, we want to select a set of element pairs that can maximize the expected overall inference power. 
Formally, we define the element pair selection problem as follows:
\begin{align}
\begin{split}
    \max \quad & \mathds{E}_{\mathds{P}(Q^+\,|\,Q)}\big[\mathcal{I}(P\,|\,Q^+)\big] , \\
    \text{s.t.} \quad  & |\,Q\,| = B \wedge Q\subseteq P,
\end{split}
\end{align}
where $B$ is the number constraint of selected element pairs.
Here, we can pick not only entity pairs, but also class and relation pairs.

Below, we present two approximation algorithms to solve the above problem. 
They also measure the inference power inside.

\smallskip
\noindent\textbf{A greedy algorithm.}
We describe a greedy algorithm to solve the element pair selection problem.
For convenience, the gain of a new element pair $q$, denoted by $\mathcal{G}(q\,|\,Q)$, is defined as 
\begin{align}
    \mathcal{G}(q\,|\,Q) = \mathds{E}_{\mathds{P}(Q^+\,|\,Q\cup\{q\})}\big[\mathcal{I}(P\,|\,Q^+)\big] - \mathds{E}_{\mathds{P}(Q^+\,|\,Q)}\big[\mathcal{I}(P\,|\,Q^+)\big],
\end{align}
where $Q$ is the element pair set already selected.
As depicted in Algorithm~\ref{algo:greedy}, the algorithm starts from an empty element pair set $Q=\emptyset$.
It iteratively selects a new element pair $q$, which maximizes $\mathcal{G}(q\,|\,Q)$, into $Q$ until the size of $Q$ reaches number constraint $B$.

\begin{theorem}
\label{thm:submodular}
    $\mathds{E}_{\mathds{P}(Q^+\,|\,Q)}\big[\mathcal{I}(P\,|\,Q^+)\big]$ is an increasing sub-modular function.
\end{theorem}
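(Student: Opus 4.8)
The plan is to establish the statement in two nested layers. Writing $F(Q):=\mathds{E}_{\mathds{P}(Q^+\,|\,Q)}\big[\mathcal{I}(P\,|\,Q^+)\big]$, I would first show that the \emph{inner} set function $g(S):=\mathcal{I}(P\,|\,S)$ is increasing and sub-modular as a function of a set $S$ of \emph{matched} pairs, and then show that passing to the expectation over the random match set $Q^+$ preserves both properties. The key observation enabling the outer layer is that, by the definition of $\mathds{P}(Q^+\,|\,Q)$, the set $Q^+$ is obtained from $Q$ by retaining each $q$ independently with probability $p_q:=\Pr[y^*(q)=1]$; averaging a monotone sub-modular function over such independent sampling is exactly the operation under which monotone sub-modularity is stable.

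For the inner layer, I would decompose $g(S)=\sum_{q'\in P} h_{q'}(S)$, where $h_{q'}(S)=\mathcal{I}(q'\,|\,S)\,\mathds{1}\big[\mathcal{I}(q'\,|\,S)>\kappa\big]$ and $\mathcal{I}(q'\,|\,S)=\max_{q\in S}\mathcal{I}(q'\,|\,q)$, with the convention $\mathcal{I}(q'\,|\,\emptyset)=0$. The un-thresholded term $\max_{q\in S}\mathcal{I}(q'\,|\,q)$ is a facility-location/coverage-type function whose marginal gain of adjoining $q$ to $S$ equals $\max\big(0,\ \mathcal{I}(q'\,|\,q)-\max_{p\in S}\mathcal{I}(q'\,|\,p)\big)$, which is non-negative (giving monotonicity) and non-increasing in $S$ (giving sub-modularity). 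Since a finite sum of increasing sub-modular functions is again increasing and sub-modular, it suffices to check that the threshold preserves both properties for each $h_{q'}$. I would verify this by a short case analysis on whether the running maxima $a_S:=\mathcal{I}(q'\,|\,S)$ and $a_T:=\mathcal{I}(q'\,|\,T)$ (for $S\subseteq T$) exceed $\kappa$: because $h_{q'}$ is $0$ below the threshold and equals the plain max above it, in each of the three cases $a_T\le\kappa$, $a_S\le\kappa<a_T$, and $\kappa<a_S\le a_T$ the marginal gain of a new pair at $S$ dominates that at $T$, and monotonicity is immediate since $h_{q'}$ is $0$ and then non-decreasing.

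For the outer layer, let $R(Q)$ denote the independent-retention sampling above, so that $F(Q)=\mathds{E}\big[g(R(Q))\big]$. Monotonicity follows from the natural coupling that reuses the same per-pair randomness on shared elements: for $Q\subseteq Q'$ one obtains $R(Q)\subseteq R(Q')$ pointwise, whence $g(R(Q))\le g(R(Q'))$ and $F(Q)\le F(Q')$. For sub-modularity I would condition on whether the adjoined pair $q$ is retained, which yields the marginal identity
\begin{align}
F(Q\cup\{q\})-F(Q)=p_q\,\mathds{E}\big[\,g(R(Q)\cup\{q\})-g(R(Q))\,\big],
\end{align}
together with the analogous identity at $Q'$. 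Since $g$ is sub-modular, its marginal $g(A\cup\{q\})-g(A)$ is non-increasing in $A$; applying the coupling $R(Q)\subseteq R(Q')$ pointwise and then taking expectations gives $\mathds{E}\big[g(R(Q)\cup\{q\})-g(R(Q))\big]\ge\mathds{E}\big[g(R(Q')\cup\{q\})-g(R(Q'))\big]$, and multiplying by the common factor $p_q\ge 0$ yields $F(Q\cup\{q\})-F(Q)\ge F(Q'\cup\{q\})-F(Q')$, as required.

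I expect the main obstacle to be the threshold $\kappa$ in the definition of $\mathcal{I}(P\,|\,\cdot)$: thresholding a sub-modular function is not sub-modular in general, so the crux is the case analysis confirming that here—where the thresholded quantity is a \emph{max} that is switched off below $\kappa$ and left untouched above—the diminishing-returns inequality still survives. By contrast, the expectation layer is comparatively routine once the coupling and the marginal identity are in place.
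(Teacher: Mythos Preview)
Your proposal is correct and proceeds along the same underlying route as the paper: both arguments derive the marginal identity
\[
F(Q\cup\{q\})-F(Q)=\Pr[y^*(q)=1]\cdot \mathds{E}_{\mathds{P}(Q^+\,|\,Q)}\Big[\big|\mathcal{I}(q'\,|\,q)-\mathcal{I}(q'\,|\,Q^+)\big|_+\Big]
\]
(summed over $q'$) and then exploit the product form of $\mathds{P}(Q^+\,|\,Q)$ to compare $Q$ with a superset $Q'$. The paper does this explicitly by writing $\mathds{P}(Q^+\,|\,Q')=\mathds{P}(Q_1^+\,|\,Q)\,\mathds{P}(Q_2^+\,|\,Q'\setminus Q)$ and dropping the $Q_2^+$ part inside the $\max$; your coupling $R(Q)\subseteq R(Q')$ is precisely this decomposition phrased more abstractly. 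So the two proofs are really the same argument in different dress. The one place where you add substance is the threshold $\kappa$: the paper's appendix silently treats $\mathcal{I}(P\,|\,Q^+)$ as the plain sum $\sum_{q'\in P}\mathcal{I}(q'\,|\,Q^+)$ and never revisits the constraint $\mathcal{I}(q'\,|\,Q^+)>\kappa$, whereas your case analysis on $a_S,a_T$ versus $\kappa$ actually closes that gap. That extra care is worthwhile, since thresholding a sub-modular function is not sub-modular in general; it works here only because the thresholded quantity is itself a running $\max$, exactly as your three cases verify.
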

\begin{proof}
Please see Appendix~\ref{app:submodular}.
\end{proof}

Therefore, the element pair selection problem is sub-modular, and the greedy algorithm gives a $(1-\frac{1}{e})$-approximation guarantee~\cite{Greedy}. 
Recall that we only consider at most $\mu$-hop paths in inference, and Line 2 is a brute-force step to compute $\mathcal{I}(P\,|\,q)$ through enumerating all $\mu$-hop paths for all $q$, which costs $O(|P|^{\mu+1})$ time.
Furthermore, finding the best $q$ in Lines 3--5 takes $O(B |P|^2)$ time. 
The overall time complexity of Algorithm~\ref{algo:greedy} is $O\big(B |P|^2 + |P|^{\mu+1}\big)$.

\begin{algorithm}[t]
\KwIn{pool $P$, number constraint $B$}
\KwOut{element pair set $Q$}
$Q\leftarrow\emptyset$\;
\lFor{$q\in P$}{compute $\mathcal{I}(P\,|\,q)$ based on Eq.~(\ref{eq:over_inf})}
\For{$i\leftarrow 1, 2,\dots,B$}{
    compute $\mathcal{G}(q\,|\,Q)$ for each $q\in P \setminus Q$\;
    $Q\leftarrow Q\cup \big\{\argmax_{q}\mathcal{G}(q\,|\,Q)\big\}$\;
}
\Return $Q$\;
\caption{Greedy selection}
\label{algo:greedy}
\end{algorithm}

\smallskip
\noindent\textbf{A graph partitioning-based algorithm.}
The brute-force step in Algorithm~\ref{algo:greedy} is time-consuming, and we intend to reduce $|P|$ for improving efficiency.
Specifically, we build a partitioning graph by partitioning element pairs in the alignment graph into mutually exclusive groups, and estimate $\mathcal{I}()$ between any two groups.
Suppose that we partition the pool $P$ into $n$ groups $\{P_i\}_{i=1}^n$.
For each element pair $q$, we compute the estimated inference power $\hat{\mathcal{I}}(P\,|\,q)$ based on a path from $q$ through partitions in $\{P_i\}_{i=1}^n$, and the time cost is reduced to $O(|P|\times n^\mu)$.
Based on the estimated inference power, we run Algorithm~\ref{algo:greedy} to find a solution.

After partitioning element pairs into groups, some paths in the original alignment graph introducing self-loops in the partitioning graph, $\hat{\mathcal{I}}(P\,|\,q)$ is smaller than $\mathcal{I}(P\,|\,q)$.
To preserve the quality of the solution, we set a threshold $\rho$ such that the drop of inference power from each element pair $q$ to its neighbors $\mathcal{N}(e)$ is smaller than $1-\rho$, i.e., $ \hat{\mathcal{I}}\big(\mathcal{N}(e)\,|\,q\big) \geq\rho \mathcal{I}\big(\mathcal{N}(e)\,|\,q\big)$.
As we only consider at most $\mu$-hop paths, $\hat{\mathcal{I}}(P\,|\,q) \geq \rho^\mu \mathcal{I}(P\,|\,q)$.
Based on this bound, we design an algorithm giving a $\rho^\mu (1-\frac{1}{e})$-approximation guarantee.

\begin{theorem}
\label{thm:part}
Algorithm~\ref{algo:part} gives a $\rho^\mu(1-\frac{1}{e})$-approximation guarantee.
\end{theorem}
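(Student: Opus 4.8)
The plan is to compare the objective that Algorithm~\ref{algo:part} actually optimizes --- the one assembled from the estimated inference power $\hat{\mathcal{I}}$ --- against the true objective built from $\mathcal{I}$, and then to combine a two-sided (sandwich) bound between them with the greedy guarantee implied by Theorem~\ref{thm:submodular}. Write $\mathcal{O}(Q)=\mathds{E}_{\mathds{P}(Q^+\,|\,Q)}\big[\mathcal{I}(P\,|\,Q^+)\big]$ for the true objective and $\hat{\mathcal{O}}(Q)=\mathds{E}_{\mathds{P}(Q^+\,|\,Q)}\big[\hat{\mathcal{I}}(P\,|\,Q^+)\big]$ for the estimated one that the partitioning-based algorithm maximizes. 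Since $\hat{\mathcal{I}}$ is obtained by evaluating exactly the same inference-power expression, only on the partitioning graph instead of the original alignment graph, the argument establishing that $\mathcal{O}$ is increasing and sub-modular applies verbatim to $\hat{\mathcal{O}}$. Hence the greedy loop that Algorithm~\ref{algo:part} runs on $\hat{\mathcal{O}}$ returns a set $\hat{Q}$ with $\hat{\mathcal{O}}(\hat{Q})\geq(1-\frac{1}{e})\,\hat{\mathcal{O}}(\hat{Q}^*)$, where $\hat{Q}^*$ maximizes $\hat{\mathcal{O}}$ subject to $|\hat{Q}^*|=B$.

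Next I would lift the per-source bound $\hat{\mathcal{I}}(P\,|\,q)\geq\rho^\mu\,\mathcal{I}(P\,|\,q)$, established just above the theorem, from a single source pair to a labeled set. Because $\mathcal{I}(q'\,|\,Q^+)=\max_{q\in Q^+}\mathcal{I}(q'\,|\,q)$ and a maximum preserves a uniform multiplicative factor, we get $\hat{\mathcal{I}}(q'\,|\,Q^+)\geq\rho^\mu\,\mathcal{I}(q'\,|\,Q^+)$ for each $q'$, and summing over the pool yields $\hat{\mathcal{I}}(P\,|\,Q^+)\geq\rho^\mu\,\mathcal{I}(P\,|\,Q^+)$. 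In the reverse direction, since merging element pairs into groups only creates self-loops that shorten or destroy inference paths, $\hat{\mathcal{I}}(P\,|\,Q^+)\leq\mathcal{I}(P\,|\,Q^+)$. Taking expectations over $\mathds{P}(Q^+\,|\,Q)$ on both inequalities is legitimate because the expectation is linear and the same distribution $\mathds{P}(Q^+\,|\,Q)$ governs both objectives; this gives the sandwich $\rho^\mu\,\mathcal{O}(Q)\leq\hat{\mathcal{O}}(Q)\leq\mathcal{O}(Q)$ for every feasible $Q$.

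Letting $Q^*$ denote an optimum of the true objective, the guarantee then follows by chaining these facts:
\[
\mathcal{O}(\hat{Q})\;\geq\;\hat{\mathcal{O}}(\hat{Q})\;\geq\;\Big(1-\frac{1}{e}\Big)\hat{\mathcal{O}}(\hat{Q}^*)\;\geq\;\Big(1-\frac{1}{e}\Big)\hat{\mathcal{O}}(Q^*)\;\geq\;\rho^\mu\Big(1-\frac{1}{e}\Big)\mathcal{O}(Q^*),
\]
where the first step is the upper half of the sandwich, the second is the greedy guarantee on the sub-modular $\hat{\mathcal{O}}$, the third uses optimality of $\hat{Q}^*$ for $\hat{\mathcal{O}}$ (so $\hat{\mathcal{O}}(\hat{Q}^*)\geq\hat{\mathcal{O}}(Q^*)$), and the last is the lower half of the sandwich. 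This is exactly the claimed $\rho^\mu(1-\frac{1}{e})$-approximation with respect to the true expected overall inference power.

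The step I expect to be the main obstacle is the set-level lifting of the per-source bound in the presence of the threshold $\kappa$ in Eq.~(\ref{eq:over_inf}): because $\hat{\mathcal{I}}\leq\mathcal{I}$, a pair $q'$ whose true inference power exceeds $\kappa$ may have estimated power below $\kappa$ and thus be dropped from the $\hat{\mathcal{I}}$ sum, which could in principle spoil the clean $\rho^\mu$ factor. I would resolve this either by applying the threshold over a common support for both sums, or by absorbing the dropped mass into the $\rho^\mu$ slack so that $\hat{\mathcal{I}}(P\,|\,Q^+)\geq\rho^\mu\,\mathcal{I}(P\,|\,Q^+)$ still holds. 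By contrast, transferring sub-modularity to $\hat{\mathcal{O}}$ is routine once one notes that $\hat{\mathcal{I}}$ has the identical functional form as $\mathcal{I}$, so no separate proof is needed there.
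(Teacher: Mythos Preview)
Your proposal is correct and follows essentially the same route as the paper: both establish the sandwich $\rho^\mu\,\mathcal{O}(Q)\leq\hat{\mathcal{O}}(Q)\leq\mathcal{O}(Q)$ and then chain it with the greedy $(1-\tfrac{1}{e})$ guarantee for the sub-modular $\hat{\mathcal{O}}$ via the same four-step inequality. If anything, you are more explicit than the paper in justifying the set-level lifting through the max and in flagging the threshold-$\kappa$ subtlety, which the paper's proof simply elides.
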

\begin{proof}
Please see Appendix~\ref{app:part}.
\end{proof}

\begin{algorithm}[t]
\KwIn{pool $P$, number constraint $B$, threshold $\rho$, edge set of alignment graph $T''$}
\KwOut{element pair set $Q$}
\lForEach{$\big(q,(r,r'),q'\big)\in T''$}{compute $\mathcal{I}(q'\,|\,q)$}
$P_1\leftarrow Q$; $n\leftarrow 1$; $flag\leftarrow \text{TRUE}$\;
\While{$flag$}{
    $flag\leftarrow \text{FALSE}$\;
    \For{$i=1,2,\dots,n$}{
        \ForEach{$q\in P_i$}{
            $\mathcal{I}_{\text{inner}}\leftarrow\sum_{q'\in P_i\cap \mathcal{N}(q)} \mathcal{I}(q'\,|\,q)$\;
            $\mathcal{I}_{\text{outer}}\leftarrow\sum_{q'\in (P\setminus P_i)\cap \mathcal{N}(q)} \mathcal{I}(q'\,|\,q)$\;
            $\rho'\leftarrow \min\{\rho',\frac{\mathcal{I}_{\text{outer}}}{\mathcal{I}_{\text{inner}} + \mathcal{I}_{\text{outer}}}$\}\;
        }
        \If{$\rho' < \rho$}{
            $(r,r')\leftarrow \argmax_{(r,r')}\ \smashoperator{\sum\limits_{(q,q')\in P_i\times P_i:(q,(r,r'),q')\in T''}}\ \mathcal{I}(q'\,|\,q)$\;
            $P_{n+1}\leftarrow \big\{q\in P_i\,|\,(q,(r,r'),q')\in T''\big\}$\;
            $P_i\leftarrow P_i\setminus P_{n+1}$;
            $n\leftarrow n + 1$;
            $flag\leftarrow \text{TRUE}$\;
            \textbf{break}\;
        }
    }
}
\lFor{$q\in P$}{compute $\hat{\mathcal{I}}(P\,|\,q)$}
$Q\leftarrow$ call Algorithm~\ref{algo:greedy} with $\hat{\mathcal{I}}(\cdot)$ \;
\Return $Q$\;
\caption{Graph partitioning-based selection}
\label{algo:part}
\end{algorithm}

The detailed algorithm is shown in Algorithm~\ref{algo:part}.
In Line 1, it computes $\mathcal{I}(q'\,|\,q)$ for all edges in the alignment graph.
In Line 2, it initializes the partitioning with a trivial partition $\{P_1\}$, where $P_1=P$.
In Lines 3--14, it iteratively splits the partition containing an element pair $q$ that does not satisfy $\hat{\mathcal{I}}\big(\mathcal{N}(e)\,|\,q\big) \geq\rho \mathcal{I}\big(\mathcal{N}(e)\,|\,q\big)$, until no partition needs to be split, i.e., $flag = \text{FALSE}$.
In Lines 5--6, it enumerates each partition $P_i$ and checks each element pair $q\in P_i$.
In Lines 7--8, it computes the inference power from $q$ to $P_i$, denoted by $\mathcal{I}_{\text{inner}}$, and the total inference power from $q$ to $Q\setminus P_i$, denoted by $\mathcal{I}_{\text{outer}}$.
Note that $1 - \frac{\mathcal{I}_{\text{outer}}}{\mathcal{I}_{\text{inner}} + \mathcal{I}_{\text{outer}}}$ is the drop of inference power, and Line 9 can find the minimal $\frac{\mathcal{I}_{\text{outer}}}{\mathcal{I}_{\text{inner}} + \mathcal{I}_{\text{outer}}}$ after iteration.
In Line 10, it checks whether the drop of inference power exceeds the threshold $\rho$.
If TRUE, Lines 11--14 find the relation pair $(r,r')$ that appears in most paths from $P_i$ to itself, and split $P_i$ based on $(r,r')$.
In Lines 15--16, it computes $\hat{\mathcal{I}}()$, and finds the element pair set $P_j$ with Algorithm~\ref{algo:greedy}.
As the main loop runs at most $|P|$ times, and $n \leq |P|$, Lines 3--14 cost $O(|P|^3)$ time. 
Therefore, the overall time complexity of Algorithm~\ref{algo:part} is $O\big(|P|^3 + B |P| n + |P| n^{\mu}\big)$. 
Since $|P|$ is usually much larger than $n$, when $\mu\geq 2$, Algorithm~\ref{algo:part} is more efficient than Algorithm~\ref{algo:greedy}.
%====================
\section{Experiments and Results}
\label{sect:exp}

In this section, we compare the proposed approach \modelname with the state-of-the-arts, and evaluate the effectiveness of its modules. 

%--------------------
\subsection{Experiment Preparation}

\noindent\textbf{Implementation.} 
We develop \modelname in Python 3.9 on a workstation with an Intel Xeon 3.3GHz CPU, 128GB memory and an NVIDIA GeForce 2080 Ti GPU card. 
We choose TransE \cite{TransE}, RotatE \cite{RotatE} and CompGCN \cite{CompGCN} as base entity-relation embedding models.
We set the embedding dimension of TransE and RotatE $dim = 100$ as suggested in~\cite{OpenEA}, and the dimension of CompGCN $dim = 200$ following~\cite{CompGCN}.
We search for the dimension of class embeddings in $\{50, 100, 150, 200\}$, and finally pick $50$.
For alignment probability calibration, we set $Z_{\text{ent}}=0.05$ and $Z_{\text{rel}}=Z_{\text{cls}}=0.1$, such that most true element matches have an alignment probability greater than $0.5$.
The similarity threshold is $\tau=0.9$ in semi-supervised learning.
We select $N=1,000$ nearest entities to build the candidate pool.
In active learning, the inference power threshold is $\kappa=0.8$, the partition threshold is $\rho=0.9$, and the maximum hops of inference paths is $\mu=5$.
At each time, active learning picks $B=100$ element pairs to probe the oracle.

\begin{table}[!t]
\caption{Dataset Statistics}
\label{tab:dataset}
\centering
{\small
\begin{tabular}{@{}c|cllc@{}}
\toprule Datasets & Entities & Relations & Classes & Matches \\
\midrule 
    D-W   & 100,000 vs. 70,000 & 413 vs. 261 & 167 vs. 116 & 70,193 \\
    D-Y   & 100,000 vs. 70,000 & 287 vs. 32  & \ \ 13 vs. 9 & 70,030 \\
    EN-DE & 100,000 vs. 70,000 & 381 vs. 196 & 109 vs. 76  & 70,248 \\
    EN-FR & 100,000 vs. 70,000 & 400 vs. 300 & 174 vs. 121 & 70,308 \\
\bottomrule
\end{tabular}}
\end{table}

\smallskip
\noindent\textbf{Datasets.} 
We use the benchmark datasets in OpenEA~\cite{OpenEA} to evaluate the performance of \modelname.
The datasets are constructed by sampling entities from three well-known real-world KGs: multi-lingual DBpedia \cite{DBpedia}, Wikidata \cite{Wikidata} and YAGO \cite{YAGO}.
There are four datasets, namely DBpedia-Wikidata (D-W), DBpedia-YAGO (D-Y), English DBpedia-German DBpedia (EN-DE), and English DBpedia-French DBpedia (EN-FR).
Each dataset contains two sets of triplets and a set of gold entity matches.
Entities in different language editions of DBpedia are extracted from the infobox, and their triplets are structured with the same schemata.
For the D-W dataset, we retrieve gold schema matches from the DBpedia SPARQL endpoint.
For the D-Y dataset, we retrieve gold schema matches from the YAGO website.
Following~\cite{ActiveEA}, we modify the datasets by removing $30\%$ of entities from the second KG to evaluate the robustness of deep active learning with dangling entities.
We list the statistics of these datasets in Table~\ref{tab:dataset}.

\smallskip
\noindent\textbf{Evaluation metrics.}
We use two sets of evaluation metrics.
The first set is H@$k$ ($k=1,10$) and Mean Reciprocal Rank (MRR), which are widely used in KG embedding.
H@$k$ is calculated by measuring the proportion of true matches within the top-$k$ nearest neighbors of each element. 
H@1 is also the accuracy of element alignment.
MRR is the average value of the reciprocal ranks of true matches.
The second set is precision, recall and F1-score, which are broadly used by conventional methods.
We follow \cite{F1EA} to compute F1-score with a greedy matching strategy.
We repeat the experiments five times, and report the average scores and run-time.
Due to the space limitation, we present H@1, MRR, F1-score and run-time in this section, and supplement H@10, precision and recall online.

%--------------------
\subsection{Overall Performance}

We compare \modelname with both deep KG alignment methods and active KG alignment algorithms.
Our hypotheses are to verify that our joint alignment model can improve the performance of both entity and schema alignment, and our active learning algorithm can achieve better accuracy with the same number of labels.

\begin{table}[!t]
\caption{Performance comparison of deep alignment methods}
\label{tab:kga}
\resizebox{\textwidth}{!}{
\begin{tabular}{@{}l|lcccccccccccc@{}}
\toprule
  \multicolumn{2}{c}{} & \multicolumn{3}{c}{D-W} & \multicolumn{3}{c}{D-Y} & \multicolumn{3}{c}{EN-DE} & \multicolumn{3}{c}{EN-FR} \\ \cmidrule(lr){3-5}\cmidrule(lr){6-8}\cmidrule(lr){9-11}\cmidrule(l){12-14}
  \multicolumn{2}{c}{} & H@1 & MRR & F1 & H@1 & MRR & F1 & H@1 & MRR & F1 & H@1 & MRR & F1 \\ 
\midrule
  \parbox[c]{3mm}{\multirow{11}{*}{\rotatebox[origin=c]{90}{Entity alignment}}}
  & PARIS & 0.451 & - & 0.487 & 0.501 & - & 0.550 & 0.583 & - & 0.602 & 0.346 & - & 0.381 \\
  & MTransE & 0.209 & 0.350 & 0.263  & 0.395 & 0.490 & 0.465  & 0.347 & 0.431 & 0.363  & 0.234 & 0.336 & 0.264  \\
  & BootEA & 0.312 & 0.409 & 0.335  & 0.475 & 0.556 & 0.512  & 0.610 & 0.672 & 0.630  & 0.228 & 0.338 & 0.293  \\
  & GCN-Align & 0.307 & 0.410 & 0.366  & 0.290 & 0.387 & 0.348  & 0.436 & 0.523 & 0.460  & 0.250 & 0.362 & 0.310  \\
  & AttrE & 0.085 & 0.153 & 0.071  & 0.570 & 0.649 & 0.551  & 0.307 & 0.400 & 0.290  & 0.246 & 0.336 & 0.225  \\
  & RSN & 0.441 & 0.521 & 0.464  & 0.514 & 0.580 & 0.541  & 0.587 & 0.662 & 0.393  & 0.393 & 0.487 & 0.421  \\
  & MuGNN & 0.356 & 0.446 & 0.411  & 0.354 & 0.428 & 0.489  & 0.580 & 0.610 & 0.657  & 0.292 & 0.418 & 0.376  \\
  & MultiKE & 0.236 & 0.255 & 0.315  & 0.718 & 0.726 & 0.736   & 0.576 & 0.604 & 0.609  & \textbf{0.608} & 0.638 & 0.626  \\
  & KECG & 0.632 & 0.726 & 0.692  & 0.728 & 0.795 & 0.765  & 0.625 & 0.711 & 0.682  & 0.565 & 0.655 & 0.635  \\
  & \modelname (CompGCN) & \textbf{0.654} & \textbf{0.742} & \textbf{0.741}  & \textbf{0.757} & \textbf{0.814} & \textbf{0.847}  & \textbf{0.657} & \textbf{0.716} & \textbf{0.699}  & 0.584 & \textbf{0.669} & \textbf{0.658}  \\
\midrule
\parbox[c]{3mm}{\multirow{11}{*}{\rotatebox[origin=c]{90}{Relation alignment}}}
  & PARIS & 0.432 & -  & 0.310 & 0.476 & -  & 0.360 & 0.594 & -  & 0.348 & 0.484 & - & 0.376 \\
  & MTransE & 0.000 & 0.043 & 0.000 & 0.111 & 0.293 & 0.133 & 0.003 & 0.027 & 0.006 & 0.000 & 0.023 & 0.000 \\
  & BootEA & 0.026 & 0.074 & 0.039 & 0.056 & 0.288 & 0.113 & 0.000 & 0.064 & 0.000 & 0.005 & 0.027 & 0.011 \\
  & GCN-Align & 0.019 & 0.072 & 0.042 & 0.056 & 0.303 & 0.122 & 0.000 & 0.031 & 0.000 & 0.005 & 0.036 & 0.017 \\
  & AttrE & 0.031 & 0.040 & 0.048 & 0.053 & 0.059 & 0.057 & 0.157 & 0.163 & 0.183 & 0.184 & 0.285 & 0.198 \\
  & RSN & 0.047 & 0.057 & 0.052 & 0.053 & 0.060 & 0.083 & 0.060 & 0.069 & 0.071 & 0.043 & 0.051 & 0.049 \\
  & MuGNN & 0.030 & 0.118 & 0.078 & 0.098 & 0.524 & 0.131 & 0.009 & 0.057 & 0.013 & 0.010 & 0.069 & 0.044 \\
  & MultiKE & 0.308 & 0.407 & 0.400 & 0.306 & 0.438 & 0.367 & 0.393 & 0.471 & 0.398 & 0.418 & 0.485 & 0.451 \\
  & KECG & 0.064 & 0.077 & 0.072 & 0.074 & 0.081 & 0.116 & 0.066 & 0.073 & 0.109 & 0.059 & 0.066 & 0.106 \\
  & \modelname (CompGCN) & \textbf{0.689} & \textbf{0.767} & \textbf{0.766} & \textbf{0.775} & \textbf{0.840} & \textbf{0.828} & \textbf{0.682} & \textbf{0.736} & \textbf{0.776} & \textbf{0.597} & \textbf{0.715} & \textbf{0.661} \\
\midrule
\parbox[c]{3mm}{\multirow{11}{*}{\rotatebox[origin=c]{90}{Class alignment}}}
  & PARIS & 0.288 & - & 0.315 & 0.511 & - & 0.569 & 0.297 & - & 0.349 & 0.136 & - & 0.157\\ 
  & MTransE & 0.004 & 0.131 & 0.000 & 0.056 & 0.278 & 0.133 & 0.000 & 0.065 & 0.000 & 0.000 & 0.084 & 0.000 \\
  & BootEA & 0.026 & 0.186 & 0.043 & 0.111 & 0.306 & 0.180 & 0.000 & 0.025 & 0.000 & 0.000 & 0.091 & 0.000 \\
  & GCN-Align & 0.026 & 0.201 & 0.067 & 0.056 & 0.218 & 0.141 & 0.000 & 0.059 & 0.000 & 0.000 & 0.099 & 0.000 \\
  & AttrE & 0.058 & 0.122 & 0.069 & 0.153 & 0.274 & 0.267 & 0.041 & 0.048 & 0.046 & 0.097 & 0.165 & 0.153 \\
  & RSN & 0.047 & 0.052 & 0.089 & 0.055 & 0.261 & 0.115 & 0.060 & 0.068 & 0.064 & 0.044 & 0.049 & 0.063 \\
  & MuGNN & 0.035 & 0.263 & 0.054 & 0.080 & 0.294 & 0.119 & 0.000 & 0.085 & 0.000 & 0.061 & 0.142 & 0.103 \\
  & MultiKE & 0.181 & 0.249 & 0.170 & 0.222 & 0.300 & 0.247 & 0.214 & 0.292 & 0.268 & 0.282 & 0.367 & 0.349 \\
  & KECG & 0.065 & 0.073 & 0.068 & 0.078 & 0.284 & 0.136 & 0.067 & 0.075 & 0.085 & 0.061 & 0.068 & 0.086 \\
  & BERTMap & 0.629 & -  & 0.469  & 0.692 & -  & 0.813  & 0.289 & -  & 0.307  & 0.376 & -  & 0.384  \\
  & \modelname (CompGCN) & \textbf{0.661} & \textbf{0.760} & \textbf{0.746} & \textbf{0.785} & \textbf{0.843} & \textbf{0.853} & \textbf{0.667} & \textbf{0.717} & \textbf{0.724} & \textbf{0.633} & \textbf{0.672} & \textbf{0.750} \\
\bottomrule
\end{tabular}}
\end{table}

\begin{table}[!t]
\caption{Running time comparison of deep alignment methods}
\label{tab:runtime}
\resizebox{\textwidth}{!}{
\begin{tabular}{@{}lcccc|lcccc@{}}
\toprule
& D-W & D-Y & EN-DE & EN-FR & & D-W & D-Y & EN-DE & EN-FR \\
\midrule
% \cmidrule(r){1-5}\cmidrule(l){6-10}
PARIS     & 7.12s & 7.74s & 10.6s & 7.01s & DAAKG (TransE)              & 2.13h & 2.68h & 2.59h & 2.16h \\
MTransE   & 0.20h & 0.26h & 0.23h & 0.21h & \ \ w/o class embeddings & 1.68h & 1.89h & 2.04h & 1.67h \\
BootEA    & 1.98h & 2.59h & 2.32h & 2.14h & \ \ w/o mean embeddings  & 1.86h & 2.25h & 2.41h & 1.61h \\
GCN-Align & 1.07h & 1.34h & 1.17h & 1.10h & \ \ w/o semi-supervision & 0.23h & 0.33h & 0.32h & 0.27h \\
AttrE     & 2.10h & 2.13h & 2.80h & 3.04h & DAAKG (RotatE)               & 2.26h & 2.91h & 2.56h & 2.34h \\
RSN       & 3.18h & 3.98h & 3.61h & 3.45h & \ \ w/o class embeddings & 1.70h & 2.01h & 2.01h & 1.81h \\
MuGNN     & 1.71h & 2.38h & 2.09h & 1.87h & \ \ w/o mean embeddings  & 1.79h & 2.34h & 2.10h & 1.90h \\
MultiKE   & 0.67h & 0.80h & 0.79h & 1.32h & \ \ w/o semi-supervision & 0.25h & 0.35h & 0.30h & 0.29h \\
KECG      & 1.97h & 2.69h & 2.40h & 2.24h & DAAKG (CompGCN)                & 2.37h & 2.99h & 2.39h & 2.52h \\
BERTMap   & 142s  & 68.7s & 107s  & 130s  & \ \ w/o class embeddings & 1.82h & 2.20h & 1.85h & 1.86h \\
DAAKG (CompGCN)  & 2.37h & 2.99h & 2.39h & 2.52h & \ \ w/o mean embeddings  & 1.96h & 2.59h & 2.09h & 1.97h \\
\cline{1-5} & & & & & \ \ w/o semi-supervision & 0.27h & 0.38h & 0.29h & 0.30h \\
\cline{6-10}
\end{tabular}}
\end{table}

\smallskip
\noindent\textbf{Performance of deep alignment.} 
To our best knowledge, most of existing deep KG alignment methods focus on entity alignment.
However, some of them can conduct schema alignment by treating classes as entities. 
For deep KG alignment, we compare \modelname with nine competitors, including MTransE~\cite{MTransE}, BootEA~\cite{BootEA}, GCN-Align~\cite{GCN-Align}, AttrE~\cite{AttrE}, RSN~\cite{RSN}, MuGNN~\cite{MuGNN}, MultiKE~\cite{MultiKE}, KECG \cite{KECG} and BERTMap~\cite{BERTMap}.
Note that BERTMap only aligns classes, while AttrE and MultiKE additionally leverages literal attributes of entities.
See Sect.~\ref{subsect:deep} for details.
We also compare with PARIS \cite{PARIS}, a conventional probabilistic method for aligning instances, classes, and relations simultaneously across ontologies. 

Table~\ref{tab:kga} presents the comparison results, and we make several observations: 
(1) For entity alignment, \modelname outperforms all competitors in most cases.
On average, \modelname improves H@1 by 0.026, MRR by 0.014 and F1-score by 0.043, compared with the second-best method, KECG.
This is because \modelname can benefit from the joint embedding learning with schema alignment.
(2) For relation and class alignment, only \modelname can achieve satisfactory results, BERTMap and PARIS are average in performance, and all deep entity alignment competitors perform poorly.
The primary reason for this is that these entity alignment competitors are not intended for schema alignment.
They treat classes as entities in embedding learning, which causes the significant loss of ontology semantics.
Compared with BERTMap, \modelname improves H@1 by 0.378 on EN-DE and 0.258 on EN-FR.
The reason is that BERTMap fails to handle multilingual class names, while \modelname leverages KG structures to align classes.
On the other hand, BERTMap can efficiently align classes of monolingual names without the entity-class information.
The results of \modelname and PARIS also indicate that entity and schemata alignment mutually benefit.
(3) Among the deep competitors, MultiKE gains the best overall results, indicating that both entity and schema alignment benefit from literal attributes.
However, MultiKE is sensitive to literal attributes, e.g., it only obtains 0.236 and 0.315 of H@1 and F1-score on the D-W dataset, respectively, as entity names in DBpedia and Wikidata are quite different.
\modelname achieves comparable results with MultiKE, due to its complex GNN encoder and semi-supervision.
We plan to incorporate \modelname with literal encoders for further performance improvement.

From Table~\ref{tab:runtime}, we observe that PARIS runs pretty fast, because it needs no training. 
Compared to it, all deep methods cost more run-time. 
\modelname runs relatively slow, due to semi-supervision and schema alignment.
As BERTMap only processes classes, it spends several minutes.

\smallskip
\noindent\textbf{Performance of active alignment.} 
For active KG alignment, we choose the following competitors:
\emph{Random} is the default element pair selection algorithm used in the training set construction.
\emph{Degree} selects element pairs with the largest degrees in the alignment graph.
\emph{PageRank} picks element pairs with the highest PageRank scores.
\emph{Uncertainty} finds element pairs with the most uncertain prediction, i.e., the largest entropy of predicted alignment probability, which is widely employed in active entity alignment methods for tabular data~\cite{Corleone,DTAL}. 
\emph{ActiveEA}~\cite{ActiveEA} calculates the uncertainty of model prediction, and identifies element pairs which are most effective in reducing the overall uncertainty of its neighbors.

We assess the H@1 and F1, namely the progressive scores on the unseen test dataset of the joint alignment model with $10\%, 20\%, 30\%,$ $40\%$ and $50\%$ of training data selected by different active learning algorithms. 
For Uncertainty, ActiveEA and \modelname, we employ each of them to iteratively pick up to 100 element pairs to train the model.
It is worth noting that these datasets assume that each element matches at most one other element, and all deep alignment methods leverage this restriction to remove non-matches.
Following~\cite{ActiveEA}, we compute the proportion of labeled matches, instead of all element pairs in the pool.

Figure~\ref{fig:al} shows the results, and we make several observations:
(1) \modelname consistently achieves the best H@1 and F1-score on all datasets.
This demonstrates the effectiveness of \modelname.
The main reason is that the batch active learning in \modelname can find element pairs with the largest overall inference power, while the selected element pairs from the competitors may infer each other in the same batch.
(2) ActiveEA is the second-best algorithm, which shows that the structural sampling can also capture inference power.
(3) With the same number of labels, CompGCN works slightly better than RotatE and TransE, e.g., compared with RotatE, CompGCN improves H@1 by 0.054 and F1-score by 0.041 with 50\% of training data.
This is mainly related to the performance of the models themselves.

\begin{figure}
    \centering
    \includegraphics[width=\textwidth]{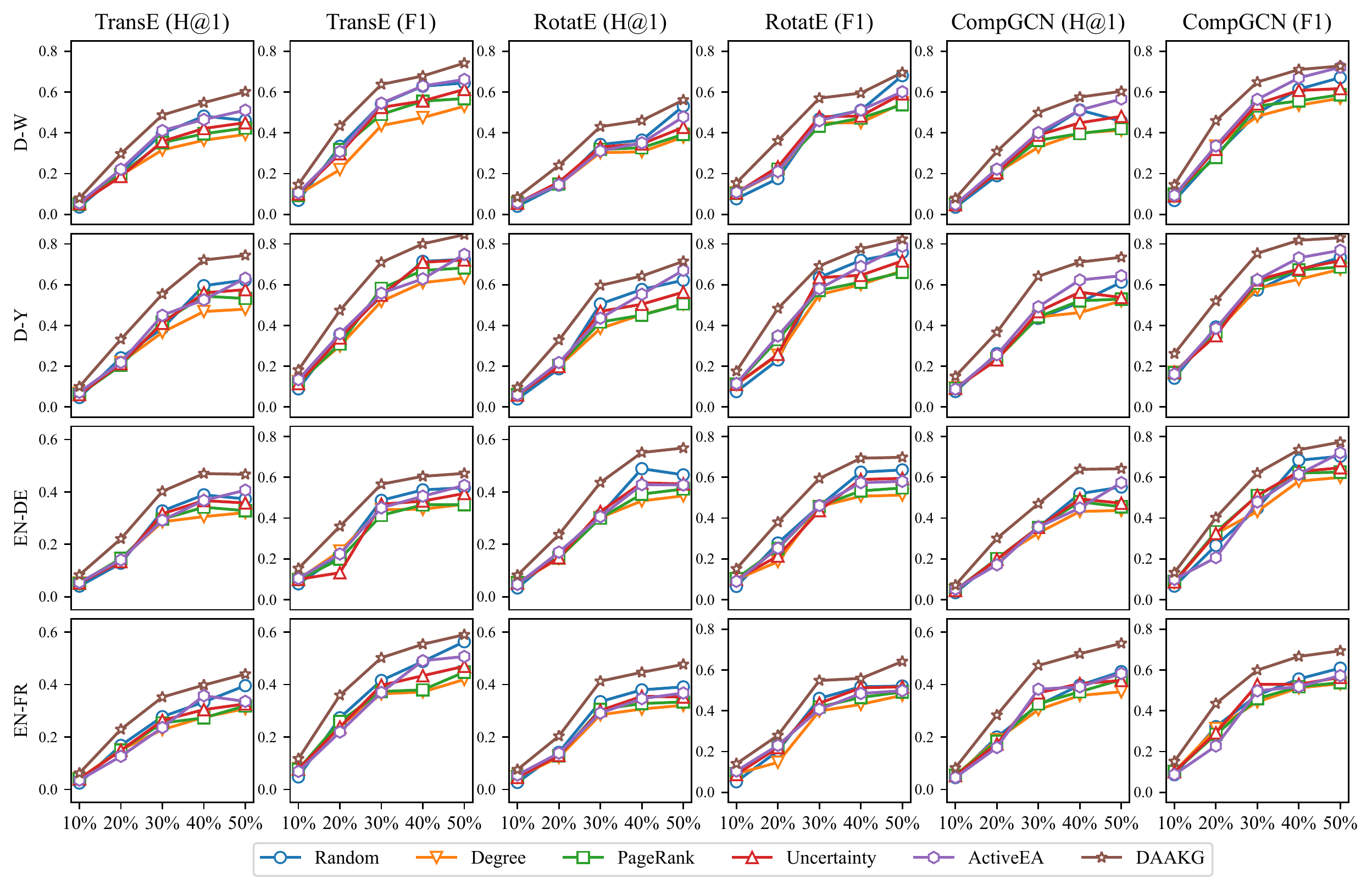}
    \caption{H@1 and F1 comparison of active alignment algorithms}
    \label{fig:al}
\end{figure}

\begin{table}[!ht]
\caption{Ablation study of embedding-based joint alignment}
\label{tab:ablation}
\resizebox{\textwidth}{!}{
\begin{tabular}{l|lccccccccccccc}
\toprule
  \multicolumn{2}{c}{\multirow{2}{*}{H@1}} & \multicolumn{3}{c}{D-W} & \multicolumn{3}{c}{D-Y} & \multicolumn{3}{c}{EN-DE} & \multicolumn{3}{c}{EN-FR} \\
  \cmidrule(lr){3-5}\cmidrule(lr){6-8}\cmidrule(lr){9-11}\cmidrule(l){12-14}
  \multicolumn{2}{c}{} & Ent. & Rel. & Cls. & Ent. & Rel. & Cls. & Ent. & Rel. & Cls. & Ent. & Rel. & Cls. \\ 
\midrule
\parbox[c]{3mm}{\multirow{4}{*}{\rotatebox[origin=c]{90}{TransE}}}
  & \modelname & 0.608 & 0.543 & 0.514  & 0.752 & 0.771 & 0.701  & 0.492 & 0.669 & 0.649  & 0.470 & 0.541 & 0.608  \\
  & \ \ w/o class embeddings & 0.542 & 0.480 & 0.477  & 0.751 & 0.759 & 0.619  & 0.465 & 0.604 & 0.616  & 0.441 & 0.469 & 0.503  \\
  & \ \ w/o mean embeddings & 0.547 & 0.188 & 0.257  & 0.701 & 0.367 & 0.429  & 0.415 & 0.295 & 0.226  & 0.456 & 0.361 & 0.328  \\
  & \ \ w/o semi-supervision & 0.525 & 0.388 & 0.390  & 0.745 & 0.551 & 0.463  & 0.357 & 0.401 & 0.526  & 0.189 & 0.468 & 0.390   \\
\midrule
\parbox[c]{3mm}{\multirow{4}{*}{\rotatebox[origin=c]{90}{RotatE}}}
  & \modelname & 0.566 & 0.588 & 0.516  & 0.731 & 0.703 & 0.681  & 0.580 & 0.659 & 0.541  & 0.496 & 0.487 & 0.599  \\
  & \ \ w/o class embeddings & 0.463 & 0.443 & 0.501  & 0.613 & 0.638 & 0.570  & 0.458 & 0.569 & 0.529  & 0.392 & 0.409 & 0.526  \\
  & \ \ w/o mean embeddings & 0.472 & 0.574 & 0.428  & 0.635 & 0.660 & 0.656  & 0.489 & 0.633 & 0.476  & 0.492 & 0.427 & 0.549  \\
  & \ \ w/o semi-supervision & 0.372 & 0.304 & 0.275  & 0.517 & 0.637 & 0.543  & 0.445 & 0.593 & 0.432  & 0.229 & 0.433 & 0.566  \\
\midrule
\parbox[c]{3mm}{\multirow{4}{*}{\rotatebox[origin=c]{90}{CompGCN}}}
  & \modelname & 0.654 & 0.689 & 0.661  & 0.757 & 0.775 & 0.785  & 0.657 & 0.682 & 0.667  & 0.584 & 0.597 & 0.633  \\
  & \ \ w/o class embeddings & 0.600 & 0.688 & 0.599  & 0.722 & 0.742 & 0.663  & 0.634 & 0.588 & 0.531  & 0.488 & 0.447 & 0.486  \\
  & \ \ w/o mean embeddings & 0.597 & 0.643 & 0.563  & 0.713 & 0.738 & 0.765  & 0.573 & 0.602 & 0.664  & 0.496 & 0.578 & 0.605  \\
  & \ \ w/o semi-supervision & 0.536 & 0.684 & 0.568  & 0.727 & 0.714 & 0.537  & 0.357 & 0.515 & 0.397  & 0.426 & 0.553 & 0.625  \\
\bottomrule
\toprule
  \multicolumn{2}{c}{\multirow{2}{*}{F1}} & \multicolumn{3}{c}{D-W} & \multicolumn{3}{c}{D-Y} & \multicolumn{3}{c}{EN-DE} & \multicolumn{3}{c}{EN-FR} \\ 
  \cmidrule(lr){3-5}\cmidrule(lr){6-8}\cmidrule(lr){9-11}\cmidrule(l){12-14}
  \multicolumn{2}{c}{} & Ent. & Rel. & Cls. & Ent. & Rel. & Cls. & Ent. & Rel. & Cls. & Ent. & Rel. & Cls. \\ 
\midrule
\parbox[c]{3mm}{\multirow{4}{*}{\rotatebox[origin=c]{90}{TransE}}}
  & \modelname & 0.692 & 0.677 & 0.637 & 0.852 & 0.850 & 0.807 & 0.621 & 0.710 & 0.701 & 0.635 & 0.690 & 0.722 \\
  & \ \ w/o class embeddings & 0.656 & 0.603 & 0.601 & 0.836 & 0.840 & 0.665 & 0.564 & 0.683 & 0.672 & 0.557 & 0.579 & 0.618 \\
  & \ \ w/o mean embeddings & 0.640 & 0.252 & 0.328 & 0.773 & 0.412 & 0.534 & 0.562 & 0.291 & 0.281 & 0.517 & 0.313 & 0.384 \\
  & \ \ w/o semi-supervision & 0.642 & 0.518 & 0.520 & 0.823 & 0.659 & 0.561 & 0.443 & 0.570 & 0.657 & 0.485 & 0.564 & 0.528 \\
\midrule
\parbox[c]{3mm}{\multirow{4}{*}{\rotatebox[origin=c]{90}{RotatE}}}
  & \modelname & 0.675 & 0.692 & 0.634 & 0.797 & 0.765 & 0.742 & 0.686 & 0.745 & 0.655 & 0.617 & 0.609 & 0.701 \\
  & \ \ w/o class embeddings & 0.588 & 0.570 & 0.622 & 0.635 & 0.729 & 0.657 & 0.584 & 0.677 & 0.645 & 0.522 & 0.538 & 0.643 \\
  & \ \ w/o mean embeddings & 0.596 & 0.681 & 0.556 & 0.710 & 0.697 & 0.713 & 0.611 & 0.726 & 0.600 & 0.614 & 0.555 & 0.661 \\
  & \ \ w/o semi-supervision & 0.502 & 0.430 & 0.397 & 0.577 & 0.701 & 0.676 & 0.572 & 0.696 & 0.560 & 0.342 & 0.561 & 0.675 \\
\midrule
\parbox[c]{3mm}{\multirow{4}{*}{\rotatebox[origin=c]{90}{CompGCN}}}
  & \modelname & 0.741 & 0.766 & 0.746 & 0.847 & 0.828 & 0.853 & 0.699 & 0.776 & 0.724 & 0.658 & 0.661 & 0.750 \\
  & \ \ w/o class embeddings & 0.701 & 0.765 & 0.701 & 0.784 & 0.796 & 0.687 & 0.679 & 0.641 & 0.634 & 0.579 & 0.582 & 0.558 \\
  & \ \ w/o mean embeddings & 0.699 & 0.733 & 0.673 & 0.774 & 0.809 & 0.824 & 0.591 & 0.637 & 0.692 & 0.610 & 0.722 & 0.727 \\
  & \ \ w/o semi-supervision & 0.651 & 0.762 & 0.677 & 0.802 & 0.761 & 0.609 & 0.515 & 0.621 & 0.523 & 0.505 & 0.658 & 0.692 \\
\bottomrule
\end{tabular}}
\end{table}

%--------------------
\subsection{Further Analysis}

\noindent\textbf{Analysis of embedding-based joint alignment.}
We consider three KG embedding models, i.e., TransE, RotatE and CompGCN, to encode the entity-relation structures in \modelname. 
Additionally, we compare \modelname with three variants: 
\modelname w/o class embeddings, which treats classes as entities to embed. 
\modelname w/o mean embeddings, which aligns classes and relations only with their original embeddings.
\modelname w/o semi-supervision, which does not leverage semi-supervised learning.

Table~\ref{tab:ablation} depicts the results. 
We can observe that (1) compared to TransE and RotatE, \modelname with CompGCN consistently achieves the best H@1 and F1-score on all datasets.
The major reason is that CompGCN can aggregate the neighboring information for entities, which enables the alignment model to make more accurate comparison based on neighboring entities.
(2) Class embeddings improve class alignment by 0.078 of H@1 and 0.081 of F1-score on average, which shows that the dedicated entity-class scoring function generates better class embeddings.
(3) Mean embeddings are the most important component for schema alignment.
The reason is that schema matches are far fewer than entity matches, and the alignment model underfits to schema alignment. 
The mean embeddings leverage entity matches to improve the quality of schema embeddings.
(4) Both class and mean embeddings improve H@1 and F1-score on entity alignment.
The reason may be that, although they have no direct impact on entity alignment, the class and relation matches can affect the discriminability of entity embeddings.
(5) As shown in Table~\ref{tab:runtime}, semi-supervision is the most time-consuming component. 
However, it improves H@1 by 0.145 and F1-score by 0.126 on average, indicating that potential matches help the alignment model discover more matches.

\begin{table}[!t]
\caption{Accuracy of inference power measurement}
\label{tab:inf}
{\small
  \begin{tabular}{@{}lcccc@{}}
  \toprule
  & D-W & D-Y & EN-DE & EN-FR \\
  \midrule
  TransE & 0.933 & 0.948 & 0.977 & 0.965 \\
  RotatE & 0.844 & 0.824 & 0.953 & 0.957 \\
  CompGCN & 0.772 & 0.763 & 0.872 & 0.864 \\
  \bottomrule
\end{tabular}}
\end{table}

\smallskip
\noindent\textbf{Analysis of inference power measurement.}
To evaluate the effectiveness of inference power measurement, we measure the proportion of element matches in the inferred element pairs, i.e., the accuracy of inference power measurement.
Table~\ref{tab:inf} reports the results with the three entity-relation embedding models, and we make two main observations:
(1) The lowest accuracy of inference power is 0.772, reflecting that our inference power measurement only makes 22.8\% of errors.
This good accuracy helps the selection algorithm effectively compare element pairs.
Note that, as unlabeled elements are aligned by the joint alignment model trained with additional new labels, it would not decrease the accuracy of model prediction.
(2) TransE has the most accurate inference power measurement.
RotatE comes in second, and CompGCN is the worst.
The reason is that the bound of tail entities is more accurate for TransE compared with RotatE and CompGCN.

\begin{figure}
    \centering
    \includegraphics[width=.8\columnwidth]{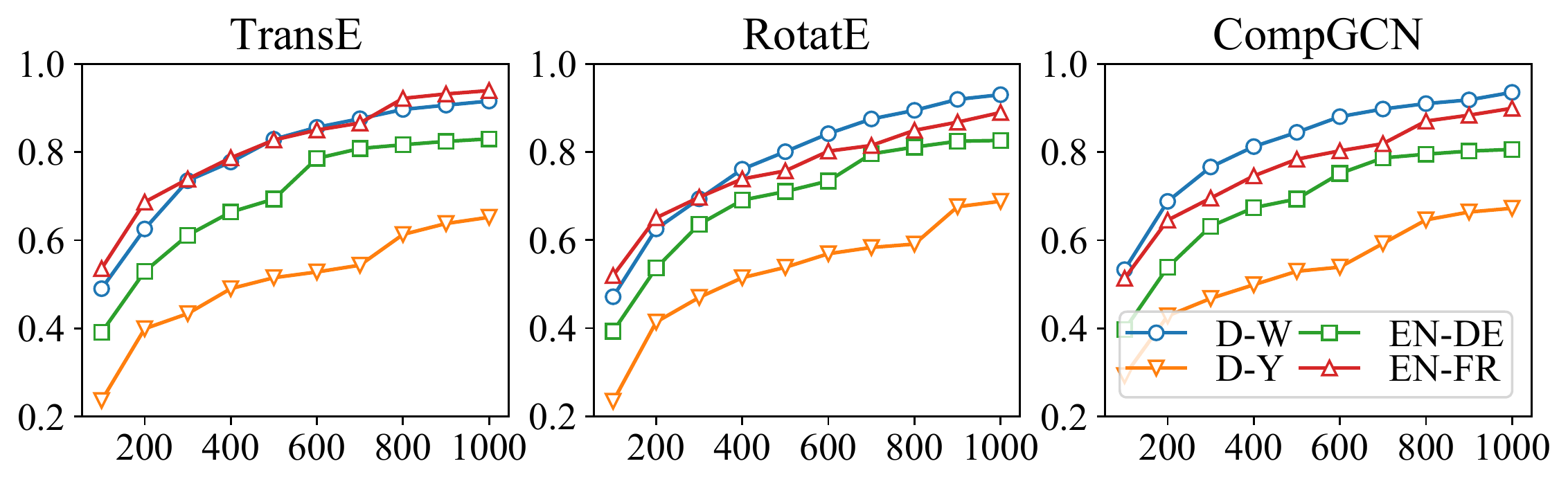}
    \caption{Recall of element pairs w.r.t. different $N$}
    \label{fig:blocking}
\end{figure}

\begin{figure}
    \centering
    \includegraphics[width=.75\columnwidth]{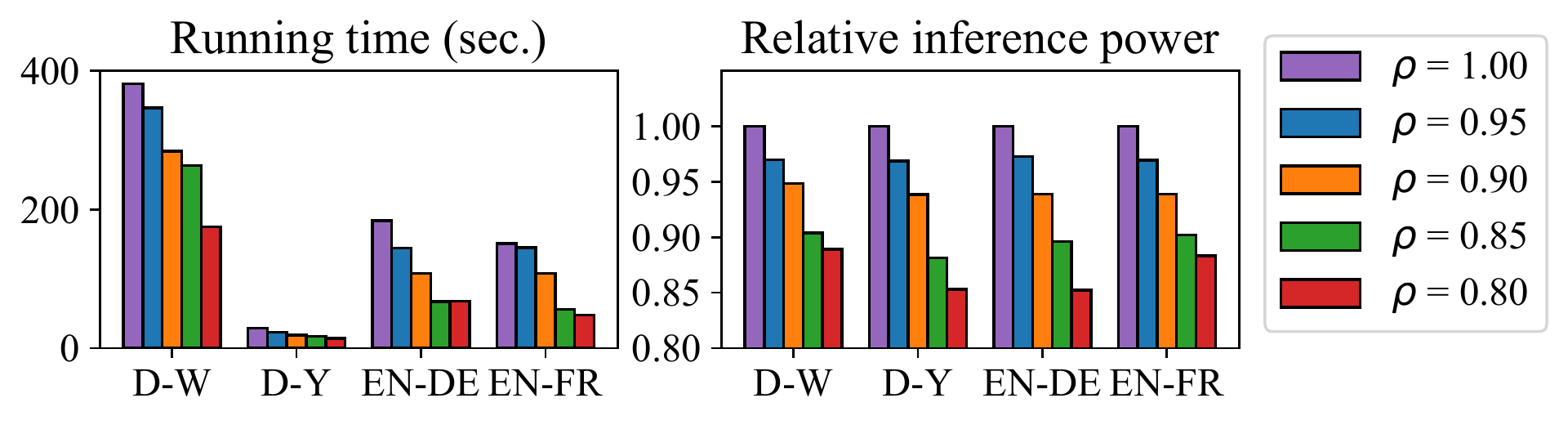}
    \caption{Running time and relative inference power of graph partitioning-based selection w.r.t. different $\rho$}
    \label{fig:batch}
\end{figure}

\smallskip
\noindent\textbf{Analysis of batch active learning.}
To evaluate the effectiveness of element pair pool generation, we set $N=100, 200,300,\dots,1000$ and compute the recall of reserved entity pairs in the pool.
Figure~\ref{fig:blocking} shows the results. 
We can find that (1) on the D-W, EN-DE and EN-FR datasets, pool generation can discover at least $80.6\%$ of element matches with top-$1000$ nearest neighbors, which reduces about $97.5\%$ of total element pairs.
This shows that pool generation can dramatically reduce the pool size while preserving a large portion of element matches.
(2) On the D-Y dataset, the recall ranges from $65.2\%$ to $68.8\%$, which is significantly worse than the other three datasets.
The main reason is that there are only 30 schema matches on the D-Y dataset, and most entities are described with similar sets of relations or classes, making the schema signatures less discriminating.

To evaluate the efficiency and effectiveness of graph partitioning-based selection, we set $\rho=1.00, 0.95, 0.90, 0.85, 0.80$ (when $\rho=1.00$, it degenerates to greedy selection), compute the relative inference power compared with greedy selection, and record the run-time.
As shown in Figure~\ref{fig:batch}, when $\rho$ decreases, the partitioning-based selection algorithm runs faster at the cost of inference power.
Extremely, when $\rho=0.80$, it is 2.5 times faster than greedy selection on average, while preserving at least $88\%$ of inference power.
%====================
\section{Conclusion}
\label{sect:concl}
In this paper, we propose a deep active alignment approach for jointly aligning entities and schemata in KGs.
We introduce a deep alignment model which encodes KG structures with both entity-relation and entity-class information, and aligns KG elements with different similarity functions.
As training deep alignment models requires a large number of labels, we introduce an active alignment approach.
We measure the inference power between element pairs based on the KG embedding model.
For batch active learning, we define an element pair selection problem that maximizes the expected overall inference power.
We prove that it is a sub-modular optimization problem that can be solved by a greedy algorithm.
To improve the efficiency, we further propose a graph partitioning-based algorithm with an approximation ratio guarantee.
The empirical evaluation shows that our approach significantly improves the accuracy of KG alignment under the same labeling budget.
In future work, we plan to extend our deep alignment model to leverage the side information (e.g., literals) of entities.
We also want to study active alignment for more types of KG alignment models.

\begin{acks}
This work is supported by the National Natural Science Foundation of China (No. 62272219) and the Alibaba Group through Alibaba Research Fellowship Program.
\end{acks}
%====================
\appendix

%--------------------%
\section{Proof of Theorem~\ref{thm:submodular}}
\label{app:submodular}

The increment of inference power to one element pair $q'$ is
\begin{align*}
    \mathcal{I}\big(q'\,|\,Q^+\cup\{q\}\big) - \mathcal{I}(q'\,|\,Q^+) 
    = \big|\,\mathcal{I}(q'\,|\,q) - \mathcal{I}(q'\,|\,Q^+)\,\big|_+.
\end{align*}

We simplify the gain of a new element pair $q$ ($q\not\in Q$) as follows:
\begin{align*}
    \mathcal{G}(q\,|\,Q) & =\mathds{E}_{\mathds{P}(Q^+\,|\,Q\cup\{q\})}\big[\mathcal{I}(q'\,|\,Q^+)\big] - \mathds{E}_{\mathds{P}(Q^+\,|\,Q)}\big[\mathcal{I}(q'\,|\,Q^+)\big] \\
    & = \Pr\big[y^*(q)=0\big] \adjustlimits\sum_{q'\in P}\sum_{Q^+\subseteq Q} \mathds{P}(Q^+\,|\,Q)\,\mathcal{I}(q'\,|\,Q^+) \\
    &\quad + \Pr\big[y^*(q)=1\big] \adjustlimits\sum_{q'\in P}\sum_{Q^+\subseteq Q}\mathds{P}(Q^+\,|\,Q)\, \mathcal{I}\big(q'\,|\,Q^+\cup\{q\}\big) - \mathds{E}_{\mathds{P}(Q^+\,|\,Q)}\big[\mathcal{I}(q'\,|\,Q^+)\big] \\
    &= \Pr\big[y^*(q)=1\big] \adjustlimits\sum_{q'\in P}\sum_{Q^+\subseteq Q}\mathds{P}(Q^+\,|\,Q) \cdot \big|\mathcal{I}(q'\,|\,q) - \mathcal{I}(q'\,|\,Q^+)\big|_+.
\end{align*}

Obviously, $\mathcal{G}(q\,|\,Q) \geq 0$. 
Thus, $\mathds{E}_{\mathds{P}(Q^+\,|\,Q)}\big[\mathcal{I}(P\,|\,Q^+)\big]$ is an increasing function.

Let $Q'$ be another element pair set such that $Q\subseteq Q'$ and $q\not\in Q'$.
We compare $\mathcal{G}(q\,|\,Q)$ and $\mathcal{G}(q\,|\,Q')$.
Note that,
\begin{align*}
\mathcal{G}(q\,|\,Q') 
&= \Pr\big[y^*(q)=1\big] \adjustlimits\sum_{q'\in P}\sum_{Q^+\subseteq Q'}\mathds{P}(Q^+\,|\,Q') \cdot \big|\mathcal{I}(q'\,|\,q) - \mathcal{I}(q'\,|\,Q^+)\big|_+ \\
&\leq \Pr\big[y^*(q)=1\big] \adjustlimits\sum_{q'\in P}\sum_{Q_1^+\subseteq Q} \mathds{P}(Q_1^+\,|\,Q) \Big(\sum_{Q_2^+\subseteq Q'\setminus Q} \mathds{P}(Q_2^+\,|\,Q'\setminus Q) \\
&\qquad\qquad\qquad\qquad\qquad\qquad\qquad\qquad\qquad \times \big|\mathcal{I}(q'\,|\,q) - \mathcal{I}\big(q'\,|\,(Q_1^+\cup Q_2^+)\cap Q\big)\big|_+\Big).
\end{align*}

As $(Q_1^+\cup Q_2^+)\cap Q = Q_1^+$, and $\sum_{Q_2^+\subseteq Q'\setminus Q} \mathds{P}(Q_2^+\,|\,Q'\setminus Q)=1$, we have $\mathcal{G}(q\,|\,Q') \leq \mathcal{G}(q\,|\,Q)$, proving that $\mathds{E}_{\mathds{P}(Q^+\,|\,Q)}\big[\mathcal{I}(P\,|\,Q^+)\big]$ is sub-modular.

%--------------------%
\section{Proof of Theorem~\ref{thm:part}}
\label{app:part}

We use $\mathcal{F}()$ to denote the optimization goal computed with the original inference power function, and $\hat{\mathcal{F}}()$ to denote the optimization goal computed with the estimated inference power function.
Due to $\rho^\mu \mathcal{I}(q'\,|\,q)\leq \hat{\mathcal{I}}(q'\,|\,q)\leq \mathcal{I}(q'\,|\,q)$, we have $\forall Q, \rho^\mu \mathcal{F}(Q)\leq\hat{\mathcal{F}}(Q)\leq \mathcal{F}(Q)$.

Let $Q^*$ be the optimal solution on the original alignment graph, $\hat{Q}^*$ be the optimal solution on the partitioning graph, and $Q$ be the solution obtained by Algorithm~\ref{algo:part}.
We have
\begin{align*}
    \mathcal{F}(Q) & \geq \hat{\mathcal{F}}(Q) \geq \Big(1-\frac{1}{e}\Big)\hat{\mathcal{F}}(\hat{Q}^*)\geq \Big(1-\frac{1}{e}\Big)\hat{\mathcal{F}}(Q^*) \geq \rho^\mu (1-\frac{1}{e})\mathcal{F}(Q^*).
\end{align*}

Thus, Algorithm~\ref{algo:part} gives a $\rho^\mu (1-\frac{1}{e})$-approximation guarantee.

% \begin{acks}
% This work is supported by National Natural Science Foundation of China (No. 62272219) and Alibaba Group through Alibaba Research Fellowship Program.
% \end{acks}

\bibliographystyle{ACM-Reference-Format}
\bibliography{main}
% \balance
%%
%% If your work has an appendix, this is the place to put it.
% \appendix

\end{document}